\documentclass[11pt, a4paper]{amsart}
\usepackage{a4wide}
\usepackage[utf8]{inputenc}
\usepackage[english]{babel}
\usepackage{amsthm}

\newtheorem{theorem}{Theorem}[section]
\newtheorem{proposition}{Proposition}[section]
\newtheorem{corollary}{Corollary}[section]

\newtheorem{assumption}{Assumption}
\usepackage{amsfonts}
\usepackage{amssymb}
\usepackage{mathtools}
\usepackage{mathrsfs}
\usepackage[eulergreek]{sansmath}
\usepackage{amsmath}
\usepackage{physics}
\usepackage{subcaption}
\usepackage{chemformula}
\usepackage{chemfig}
\usepackage{graphicx}
\usepackage{acro}
\DeclareAcronym{VAE}{
  short = VAE,
  long = variational auto-encoder}
\DeclareAcronym{TEM}{
  short = TEM,
  long = transmission electron microscope}
\DeclareAcronym{Cryo-SPA}{
  short = Cryo-SPA,
  long = single-particle cryo–electron microscopy}
\DeclareAcronym{Cryo-EM}{
  short = Cryo-EM,
  long = cryogenic electron microscopy}
\DeclareAcronym{LDDMM}{
  short = LDDMM,
  long = large deformation diffeomorphic metric mapping}
\DeclareAcronym{BFGS}{
  short = BFGS,
  long = Broyden–Fletcher–Goldfarb–Shanno}
\DeclareAcronym{ML-EM}{
  short = ML-EM,
  long = expectation-maximization maximum-likelihood}
\DeclareAcronym{MAP}{
  short = MAP,
  long = maximum a posteriori}
\DeclareAcronym{CTF}{
  short = CTF,
  long = contrast transfer function}  
\usepackage{hyperref}
\hypersetup{%
    unicode=true,     
    colorlinks=true, 
    linkcolor=blue,   
    citecolor=red, 
    filecolor=blue,   
    urlcolor=blue
}
\usepackage[nameinlink]{cleveref}
\crefname{assumption}{assumption}{assumptions}
\Crefname{assumption}{Assumption}{Assumptions}
\crefalias{AlgoLine}{line}%
\makeatletter
\let\cref@old@stepcounter\stepcounter
\def\stepcounter#1{%
  \cref@old@stepcounter{#1}%
  \cref@constructprefix{#1}{\cref@result}%
  \@ifundefined{cref@#1@alias}%
    {\def\@tempa{#1}}%
    {\def\@tempa{\csname cref@#1@alias\endcsname}}%
  \protected@edef\cref@currentlabel{%
    [\@tempa][\arabic{#1}][\cref@result]%
    \csname p@#1\endcsname\csname the#1\endcsname}}
\makeatother
\usepackage{etoolbox}

\newcommand{\Cdot}{\,\cdot\,}
\newcommand{\dint}{\,\mathrm{d}}
\makeatletter
\newcommand{\subalign}[1]{%
  \vcenter{%
    \Let@ \restore@math@cr \default@tag
    \baselineskip\fontdimen10 \scriptfont\tw@
    \advance\baselineskip\fontdimen12 \scriptfont\tw@
    \lineskip\thr@@\fontdimen8 \scriptfont\thr@@
    \lineskiplimit\lineskip
    \ialign{\hfil$\m@th\scriptstyle##$&$\m@th\scriptstyle{}##$\hfil\crcr
      #1\crcr
    }%
  }%
}
\makeatother

\newcommand{\Real}{\mathbb{R}}

\newcommand{\opStyle}[1]{\operatorname{\mathcal{#1}}}
\newcommand{\argmin}{\operatorname*{arg\,min}}
\newcommand{\inpro}[3][{}]{ \left( #2 , #3 \right)_{#1}}
\newcommand{\pair}[3][{}]{\left\langle #2, #3 \right\rangle_{#1}}
\newcommand{\traceop}{\operatorname{tr}}
\newcommand{\leftact}[1][{}]{L_{#1}}
\newcommand{\rightact}[1][{}]{R_{#1}}
\newcommand{\orb}[1][{}]{\mathcal{O}_{#1}}
\newcommand{\LpSpace}{\mathscr{L}}

\newcommand{\LieGroup}{G}
\newcommand{\LieGroupDistance}{\operatorname{d}_{\LieGroup}}

\newcommand{\gelem}{g}
\newcommand{\gelemother}{h}
\newcommand{\gcurve}{\gamma}
\newcommand{\LieAlgebra}{\mathfrak{g}}

\newcommand{\aelem}{u}
\newcommand{\aelemother}{\tilde{u}}

\newcommand{\ShapeSpace}{V}
\newcommand{\template}{w}

\newcommand{\target}{y}
\newcommand{\GroupAction}{\Phi}
\newcommand{\DataSpace}{Y}
\newcommand{\data}{y}
\newcommand{\dataother}{\tilde{y}}

\newcommand{\ForwardOp}{\opStyle{F}}

\newcommand{\RegFunc}{\opStyle{S}}
\newcommand{\LossFunc}{\opStyle{L}}
\newcommand{\bbpos}{a}

\newcommand{\relbbpos}{\amod}
\newcommand{\relbbposother}{\amodother}
\newcommand{\numPep}{N}
\newcommand{\numImgs}{m}
\newcommand{\MapSpace}{X}

\newcommand{\AtmModSpace}{A}
\newcommand{\amod}{v}
\newcommand{\amodother}{\amod'}

\newcommand{\AtmToMapOp}{\opStyle{B}}
\newcommand{\AtmToMapOpPlane}{\opStyle{B}_{2D}}
\newcommand{\amodplane}{p}

\newcommand{\SO}{\operatorname{SO}}
\newcommand{\SOLieAlgebra}{\mathfrak{so}}

\newcommand{\rotmat}{\rho}

\newcommand{\gauss}{\varphi}

\newcommand{\inertia}{\mathbb{A}}

\usepackage{tabularray}
\usepackage{siunitx}
\UseTblrLibrary{booktabs, siunitx}

\newtoggle{showrevision}
\togglefalse{showrevision} 
\iftoggle{showrevision}{%
  
}{%
 
}

\title[Indirect shape gradient flows for cryo-EM]{Recovering protein conformations from single-particle cryo-EM data via indirect shape matching gradient flows}
\author{Erik Jansson$^{*}$} \address{Department of Applied Mathematics and Theoretical Physics, University of Cambridge, United Kingdom}
\email{eoj23@cam.ac.uk}
\thanks{$^*$ Corresponding author. }
\author{ Jonathan Krook}\address{Department of  Mathematics, Royal Institute of Technology, Sweden}
\author{Ozan Öktem} \address{Department of Mathematics, Royal Institute of Technology, Sweden}
\author{Carola-Bibiane Schönlieb} \address{Department of Applied Mathematics and Theoretical Physics, University of Cambridge, United Kingdom}
\begin{document}

\begin{abstract}
\Ac{Cryo-SPA} images a macromolecule as many noisy tomographic projections of its electrostatic potential. We reconstruct the protein backbone directly from such projections, as an atomic point cloud, without the intermediate step of reconstructing the 3D electrostatic potential map. We formulate this
as an indirect shape matching problem: a point-cloud template of the backbone is deformed until its simulated projections agree with the data, with the structure observed only through the imaging operator. The deformation is computed via a gradient flow on a Lie group, and we derive the resulting framework in a general geometric setting before adapting it for \ac{Cryo-SPA}. On synthetic data we recover
single- and multi-chain proteins and capture conformational transitions.
\end{abstract}

\keywords{
Inverse problems, Gradient flows, Tomography, Regularization, Shape analysis, Manifold-valued data, Optimization, Lie groups, Electron microscopy, Single particle analysis, Cryogenic electron microscopy}

\subjclass{
53Z50, 90C26, 68U10, 53Z10,92C55
}

\maketitle
\acresetall

\section{Introduction}
\label{sec:intro}

An essential part of biomedical research is to study the functionality of biomolecules, such as proteins. 
The dynamics of the 3D shape, or structure, of a protein have a large influence on its functionality. 
Naturally, a goal of structural biology is to develop methods, both experimental and computational, for determining and studying the structure  of proteins. 
Notable among these methods are those based on \ac{TEM} imaging, an idea that goes back to Rosier and Klug \cite{Rosier:1968aa}, who estimated the 3D structure of the bacteriophage T4 tail from electron micrographs. 

A key difficulty is to obtain the necessary amount of 2D \ac{TEM} images of the biomolecule from sufficiently many different views.
This is achievable when using vitrified specimens that consist of several isolated and structurally identical copies of the biomolecule in varying orientations. \Ac{Cryo-SPA} is a 3D electron microscopy technique designed for such data.
It has since its introduction in \cite{Frank:1975aa,Heel:1981aa,Vainstein:1986aa,Heel:1987aa} undergone a rapid development especially regrading computational methods,
\cite{Bai:2015aa,Nogales:2015aa,Cheng:2015aa,Carazo:2015aa,Sigworth:2016aa,Cheng:2018aa,Singer:2018aa,Singer:2020aa,Bendory:2020aa,Benji:2020aa}, sample preparation \cite{Liu:2023aa,Venien-Bryan:2023aa,Cheng:2024aa}, and instrumentation \cite{Faruqi:2015aa,Williams:2019aa}.
Through this development, which was also recognized by the 2017 Nobel Prize in Chemistry, \ac{Cryo-SPA} has become a major tool in structural biology for studying large rigid biomolecules (homogeneous particles) at (near) atomic resolution \cite{Cheng:2015ab,Nakane2020,Herzik:2020aa} that are difficult or impossible to crystallize and has proven essential in applications such as life science \cite{Nogales:2016aa} and drug discovery \cite{Renaud:2018aa,Robertson:2022aa}.

\subsection{Basics of \ac{Cryo-SPA}}
The starting point in \ac{Cryo-SPA} is a cryofixated specimen consisting of a thin slab of amorphous ice, containing several \emph{particles}, that is, instances of the same biomolecule.
This technique was introduced in \cite{Dubochet:1982aa,Adrian:1984aa} to solidify a specimen, which is needed as \ac{TEM} imaging takes place in vacuum, while preserving its structural integrity.

The vitreous slab representing the cryofixated specimen is then imaged in a \ac{TEM} where high-energy electrons scatter against the specimen, resulting in a few large 2D phase contrast \ac{TEM} images (micrographs). 
Particle picking is then used to computationally extract 2D subregions (particle images) from the micrographs, each representing a noisy 2D ``projection image'' of a \emph{single} particle.
All particles in the specimen represent the \emph{same} biomolecule, but their 3D shapes can differ depending on the specific conformation of the biomolecule that the particle represents.
Mathematically, we view particle images as elements $\data_1, \ldots, \data_\numImgs \in \DataSpace$ where the data space $\DataSpace$  is the vector space of $\Real$-valued $\LpSpace^2$-functions on $\Real^2$.

\subsection{3D reconstruction and model building}
A key part of \ac{Cryo-SPA} is 3D reconstruction, which refers to the task of recovering the volumetric electrostatic potential, henceforth called the 3D map, that is generated by each particle from the corresponding 2D particle image (one such image per particle). 
A challenge is that there is only a single 2D particle image associated with each particle, so 3D reconstruction amounts to recovering a 3D function from a single 2D image. This is only possible with clever use of the fact that all particles represent the same biomolecule, this even if they do not have identical 3D maps.  
Another challenge is that the particle has an unknown 3D orientation (pose) when generating the 2D particle image. 
Next, 2D particle images have very low signal-to-noise ratio. One reason for this is that the vitrified biological specimens result in images with very low contrast, as they act as weak phase contrast objects when imaged in a \ac{TEM}. 
Furthermore, biological specimens are sensitive to damage from radiation, so \ac{TEM} images need to be obtained using low doses (200--400~$\text{electrons}/\text{nm}^2$).

Another part of \ac{Cryo-SPA} is model building, which is the task of building (pseudo) atomic models for the particles.
Since interpretation and identification of biological functionality often rely on having access to (pseudo) atomic models, model building is essential.
In the setting we consider here, of single-chain proteins, model building is the task of  ``folding'' the known primary structure of the protein into the particle specific 3D map.

\subsection{Mathematical shape analysis}
In Jansson et al. \cite{Jansson2025}, we showed how to use shape analysis for jointly performing 3D reconstruction and model building in \ac{Cryo-SPA}.

Shape analysis is the task of finding an energy-minimizing way to \emph{deform} an initial shape (template) into a target shape. 
The modern mathematical foundations of such a framework were laid by by Grenander \cite{Gr1993,grenander2007} and it has since then found applications in medical image analysis (computational anatomy), see for instance \cite{Ceritoglu2013,Risser2013}.
Within this framework, new shapes are mathematically obtained by deforming a template through the action of a Lie group. 
The shapes, including the template, are elements in a ``shape space'', which can be, e.g., spaces of points, curves, surfaces or functions. 
\emph{Shape matching} is then reduced to the task of finding a Lie group element that through its group action maps the template shape as close as possible to a target shape.
Closeness is here defined by having a small \emph{matching energy}, so shape matching translates to solving an optimization problem on Lie groups. 
The reader may consult the book by Younes~\cite{Younes2010} and references therein for an overview of mathematical shape matching. 

Computational shape analysis is typically approached by taking the Lie group element deforming the template as the endpoint of a curve in the Lie group that in turn generates a curve in the Lie algebra.
This latter curve is governed by a set of equations known as the \emph{Euler--Poincaré} equations. 
Optimization therefore proceeds as follows: Starting from an initial value for the Lie algebra curve, one solves the Euler–Poincaré and flow equations to generate a Lie group element, which is then used to evaluate the matching energy.
The initial guess is then updated with the gradient of the energy with respect to this initial value of the Lie algebra curve.
This approach is, however, prohibitively slow, which in turn limits its usefulness for joint 3D reconstruction and model building in \ac{Cryo-SPA}.

\subsection{Specific contributions}
In this paper, we consider an alternative approach to shape analysis for joint 3D reconstruction and model building in \ac{Cryo-SPA}. It originates in work by Balehowsky et al. \cite{BaKaMo2022} that proposed a different approach, namely to compute a minimizer by following a gradient flow formulated directly on the Lie group. 

By exploiting the geometric structure of the group, the gradient flow is straightforward to express without any algebra-to-group maps. 
Empirical observation suggests that the resulting flows offer significant speed up compared to the approach used in Jansson et al. \cite{Jansson2025}. 
This opens up the possibility of using shape-based reconstruction in more complicated settings, for instance, for multi-chain proteins and for \ac{Cryo-SPA} where particles display continuous heterogeneity. 
Our main contribution is twofold:
\begin{enumerate}
    \item Extension of the geometric framework for gradient flow-based shape matching to the indirect shape matching problem, i.e., for reconstruction, a general framework that can also be applied to other problems, for instance, tomographic imaging.
    \item The application of this framework to the problem of estimating a (pseudo) atomic model of a biomolecule from \ac{Cryo-SPA} particle images, i.e., to perform joint 3D map estimation and model building 
\end{enumerate}

We next place our approach into the larger landscape of computational methods for \ac{Cryo-SPA}. 
There are many approaches to solving the reconstruction problem in \ac{Cryo-SPA}.
We organize the discussion around the distinction our method turns on: whether the atomic model is obtained sequentially, by first reconstructing a 3D map and then fitting an atomic model into it, or jointly. 
General background is given in the survey of Bendory et al.~\cite{Bendory:2020aa}, and reconstruction methods when particles display continuous conformational heterogeneity are surveyed in Sorzano et al.~\cite{Sorzano:2019aa}.

Most methods only target reconstructing a 3D map. 
For a biomolecule with a single conformation, the particles have identical 3D  map (homogeneous particles).
Such a 3D map can be computed with  RELION~\cite{Scheres:2012ab}, that uses expectation maximization to compute a maximum a posteriori estimator that is marginalized over poses, and cryoSPARC~\cite{Punjani:2017aa}, which uses branch and bound for pose estimation instead of marginalization.
When several conformations are present, the particles do not have identical 3D maps (heterogeneous particles).
RECOVAR~\cite{gilles2025} treats continuous heterogeneity through a principal component analysis, and CryoDRGN~\cite{Zhong:2021aa} through a neural network that maps a conformational latent variable to a 3D map.
In all of these, recovering an atomic model requires the further, non-trivial step of fitting the model into the reconstructed 3D map.

We instead reconstruct the atomic model of the protein backbone directly.
The backbone is a set of \ch{C_{$\alpha$}} positions, obtained by deforming a template backbone through the action of a deformation group. To compare a candidate backbone with the data, each \ch{C_{$\alpha$}} is represented by a Gaussian, and the resulting density is projected through the forward model; the deformation is driven so that these projections match the observed images. The reconstruction is therefore a set of atomic coordinates, produced in a single stage, without the intermediate step of reconstructing a 3D map and fitting a model into it. Our examples are primarily homogeneous, but we also treat a simplified heterogeneous setting, in which the images are grouped by a conformational coordinate estimated from the data, and each group is reconstructed separately.
Further, we do not consider the orientation estimation problem, that is, to determine global three-dimensional orientation of the protein given only the noisy images, and note that approaches to solve it exists, see e.g., \cite{Diepeveen:2023aa} and the references therein. 

Reconstructing an atomic model directly by projecting Gaussians has been considered before.
E2GMM~\cite{Chen2021} learns a Gaussian mixture representing the density with a neural network, without imposing backbone connectivity.
Esteve-Yag\"ue et al.~\cite{EsteveYage2023} also project Gaussians and then recover the backbone geometry in a second stage by estimating bond and torsion angles with manifold learning.
Our reconstruction differs in that it is a single variational procedure that deforms a template backbone by a gradient flow on a group of deformations. 
The output is a backbone model by construction and no separate geometry-estimation stage is required.

The deformation is driven by a gradient flow on a group of deformations, which performs shape registration. Since the target is available only through the 2D particle image, and not as a given shape, the registration is indirect.
This extends the direct registration of Balehowsky et al.~\cite{BaKaMo2022}, who study gradient flows on deformation groups for registration between given shapes and argue for their lower computational cost relative to \ac{LDDMM}. We previously used an \ac{LDDMM} approach in the \ac{Cryo-EM} setting~\cite{Jansson2025}, but  here we use the gradient-flow formulation instead, for the same reason of reduced computational complexity.

\section{Indirect shape matching problems}
Formally, shape matching refers to the task of aligning selected features of a template object with corresponding features of an indirectly observed target.
The starting point is to consider the set $\ShapeSpace$ of deformable shapes (shape space). $\ShapeSpace$ is often a vector space, but can be more general, e.g., a manifold. 

Elements of $\ShapeSpace$  are deformed by the  action of a group $\LieGroup$ on $\ShapeSpace$.
Mathematically, we denote the  group action by
$\GroupAction \colon \LieGroup \times \ShapeSpace \to \ShapeSpace$, and it represents deformations of elements in $\ShapeSpace$ that are parametrized by elements in $\LieGroup$.
Additionally, we assume that $\LieGroup$ has a manifold structure where the group operations are smooth, i.e., $\LieGroup$ is a Lie group. 
We denote its corresponding Lie algebra  by $\LieAlgebra$.
In the following, we denote  the group identity element by $e$.

In the typical treatment of shape matching (see e.g., \cite{Bruveris2013, Younes2010}), the space of deformable objects can be, for instance, spaces of point clouds (also known as landmarks), curves, surfaces, 3D densities, smooth functions, or currents. 
The Lie group is then usually the infinite-dimensional group of diffeomorphisms. 
In this work, we focus on a setting where the Lie group is a finite-dimensional manifold, but remark that the geometric framework is also applicable in the infinite-dimensional setting. 

Here, we adopt the approach of indirect matching. Concretely, this is the problem of matching a template $\template \in \ShapeSpace$ against a target that is indirectly observed through (noisy) data $\data \in \DataSpace$, where $\DataSpace$ is a vector space of data. 

An important component of indirect matching is the model for the data generation process, encoded by an operator (\emph{forward model})
\begin{equation}\label{eq:FwdOp}
  \ForwardOp \colon \ShapeSpace \to \DataSpace, 
\end{equation}  
which models how an element in $\ShapeSpace$ generates noise-free data in $\DataSpace$.
The indirect registration is performed by finding a group element $\gelem \in \LieGroup$ that parametrizes a deformation $\GroupAction \colon \LieGroup \times \ShapeSpace \to \ShapeSpace$ that minimizes a suitable energy. 
The energy necessarily contains a \emph{data fidelity}, i.e., 
a mapping $\LossFunc_{\DataSpace} \colon \DataSpace \times \DataSpace \to \Real$ that quantifies similarity in $\DataSpace$.

For many applications, especially if $\LieGroup$ is infinite-dimensional, the problem needs to be regularized to avoid excessive and unrealistic deformation, so typically a regularization $\mathcal{R} \colon \LieGroup \to \Real$ is added, resulting in the energy
\begin{equation}\label{eq:InDirectRegEnergyInGroup}
  \gelem \mapsto
\LossFunc_{\DataSpace}\Bigl((\ForwardOp \circ \GroupAction)\bigl(\gelem,\template\bigr),\target\Bigr) 
    +  \lambda \mathcal{R}(\gelem) \quad \text{for $\gelem \in \LieGroup$.}
\end{equation}
 Here, $\lambda > 0$ determines the amount of regularization.

A common choice of regularization, used in geodesic shape matching (also known as large deformation diffeomorphic metric mapping), see e.g., \cite{Bruveris2013}, is to use 
\begin{align}\label{eq:LDDMM}
    \mathcal{R}(\gelem) = \LieGroupDistance\bigl(\gelem,e \bigr)^2
    \quad\text{for $\gelem \in \LieGroup$.}
\end{align}
Here, $e \in \LieGroup$ is the identity element and
 $\LieGroupDistance \colon \LieGroup \times \LieGroup \to \Real$ is a distance on $\LieGroup$.

An issue with minimizing the energy in \cref{eq:InDirectRegEnergyInGroup} is that the evaluation of $\LieGroupDistance$  involves solving an optimization problem, so the minimization in \cref{eq:InDirectRegEnergyInGroup} is a coupled optimization problem over a Lie group. 
One can overcome this by choosing a distance that is a functional of a right-invariant Riemannian metric.
In this case, the minimization of the functional in \cref{eq:InDirectRegEnergyInGroup} can be replaced with minimizing over curves $\gcurve \colon [0,1] \to \LieGroup$ in the Lie group $\LieGroup$:
\begin{equation}\label{eq:InDirectRegEnergyInCurveOnGroup}
  \gcurve \mapsto
\LossFunc_{\DataSpace}\Bigl( (\ForwardOp\circ \GroupAction)\bigl(\gcurve(1),\template\bigr),\target\Bigr) 
    +  \lambda\int_0^1 \bigl\langle \dot\gcurve(t),\dot\gcurve(t) \bigr\rangle_{\gcurve(t)} \dint t  
    \quad\text{such that $\gcurve(0)=e$,}
\end{equation}
with $\langle \Cdot , \Cdot \rangle$ denoting a right-invariant metric on $\LieGroup$. 
The above, however, involves minimizing a functional over set of curves in the Lie group (or generating them from a curve of vector fields by solving the flow equation)  and therefore incurs additional computational cost. 

This paper considers indirect shape matching directly on the Lie group as in \cref{eq:InDirectRegEnergyInGroup} by using gradient flows.
We consider the problem of reconstructing protein backbone conformations from noisy cryo-SPA data, as in Jansson et al. \cite{Jansson2025}. 
Here, the group is a direct product of the special orthogonal group in $\Real^{3\times 3}$.

\section{Gradient flows for indirect shape matching}
\label{sec:gfs_ism}
In this section, we derive, in a general geometric setting, gradient flows for indirect shape matching problems. 
We begin by presenting the geometric background, following Balehowsky et al. \cite{BaKaMo2022}.
There are two central concepts. 
First, that of group actions and second, that of right-invariant Riemannian metrics. 

First, note that the Lie group $\LieGroup$ acts on itself from the right and the left, i.e., by right and left translation. 
We denote these actions by the mappings $\leftact[]\colon \LieGroup \times \LieGroup \to \LieGroup$ and $\rightact[]\colon \LieGroup \times \LieGroup \to \LieGroup$, respectively. 
Further, as $\LieGroup$ also is a manifold, it is possible to deduce the action on tangent spaces induced by the right and left translations. 
For right translations, this is simply given by the differential of $\rightact[\gelem]$ at $\gelemother$, i.e., $\gelem$ acts on tangent vectors $\xi \in T_\gelemother \LieGroup$ by $\xi \mapsto d(\rightact[\gelem])_\gelemother \xi$. 
As a shorthand, we write $T\rightact[\gelem] \xi$.  

For the group action $\Phi$, the \emph{orbit} of a point $\template \in \ShapeSpace$ is given by 
\begin{align*}
    \orb[\template] = \GroupAction(\LieGroup,\template) = \bigl\{  \GroupAction(\gelem,\template): \gelem \in \LieGroup \bigr\}. 
\end{align*}
The action $  \GroupAction$ of $\LieGroup$ on $\ShapeSpace$ gives rise to an \textit{infinitesimal action} of the Lie algebra $\LieAlgebra$. 
With a fixed $\template \in \ShapeSpace$, it is given by the differential of $\GroupAction$ at the identity, and is a mapping from $\LieAlgebra\times \ShapeSpace \to T \ShapeSpace$, the tangent bundle of $\ShapeSpace$.  
If a $\aelem \in \LieAlgebra$ is fixed, we can obtain a vector field $\aelem_\GroupAction\colon \ShapeSpace \to T\ShapeSpace$ that generates $\Phi$ in the sense that $\aelem_\GroupAction(w)=d\bigl(\Phi(\cdot,w)\bigr)_e \aelem$. 
Central to the following presentation is the concept of the \emph{momentum map} that arises by the left action of $\LieGroup$ on $\ShapeSpace$ lifted to the cotangent space $T^*\ShapeSpace$. 
This is a mapping $J\colon T^* \ShapeSpace \to \LieAlgebra^*$, i.e., it maps from the cotangent space of the shape space into the dual of the Lie algebra of $\LieGroup$. 
The momentum map is given by 
\begin{equation}\label{eq:mommap}
\pair[\LieAlgebra^*,\LieAlgebra]{J(\template,p)}{\aelem} = \pair[T^*_\template \ShapeSpace,T_\template \ShapeSpace]{p}{\aelem_\GroupAction(\template)}.
\end{equation}
Here, $p \in T^*_\template \ShapeSpace$ and $\pair[\Cdot,\Cdot]{\Cdot}{\Cdot}$ denotes the dual pairing between the spaces indicated in the subscript.

As our goal is to work with gradient flows, and gradients are Riemannian objects, we equip $\LieGroup$ with a metric. 
At the point $\gelem \in \LieGroup$, we denote it by $\inpro[g]{\Cdot}{\Cdot}\colon T_\gelem \LieGroup \times T_\gelem \LieGroup \to \Real$. 
As noted in \cite{BaKaMo2022}, if this metric is right-invariant, i.e., if $\inpro[e]{\aelem}{\aelemother} = \inpro[g]{T\rightact[\gelem] \aelem}{T\rightact[\gelem] \aelemother}$ for all $\gelem \in \LieGroup$ and $\aelem, \aelemother \in \LieAlgebra$, it is completely determined by a choice of inner product at the tangent space, and we write 
\begin{align*}
    \inpro[e]{\aelem}{\aelemother} = \pair[\mathfrak{g}^*, \mathfrak{g}]{\inertia\aelem}{\aelemother},
\end{align*}
where $\inertia\colon \LieAlgebra \to \LieAlgebra^*$ is the \emph{inertia operator} defining the inner product.
It is symmetric and positive-definite. 
The gradient of a mapping $E\colon \LieGroup \to \Real$ is now defined by
\begin{align}
    \label{eq:grad}
    \pair[T_\gelem^* \LieGroup,T_\gelem \LieGroup]{dE_\gelem}{TR_\gelem \aelem} = \inpro[\gelem]{\nabla E(\gelem)}{TR_\gelem \aelem}
\end{align}
for all $\aelem \in \LieAlgebra$. 
The following result, due to \cite[Proposition 2.3]{BaKaMo2022} relates $\nabla E$ to the geometric structure outlined above. 
\begin{theorem}
    \label{th:gradient_general}
    Assume that the Lie group $\LieGroup$ is equipped with a right-invariant metric and that it acts from the left on a shape space $\ShapeSpace$ by $\GroupAction\colon \LieGroup \times \ShapeSpace \to \ShapeSpace$.
    Let $J: T^* \ShapeSpace \to \LieAlgebra^*$ be the momentum map corresponding to this action.
    Let $f$ be a sufficiently smooth mapping $f \colon \ShapeSpace \to \Real$, and define for a fixed $\template \in \ShapeSpace$ the mapping $E(\gelem) = f\bigl(\Phi(g,\template)\bigr)$ from $\LieGroup$ to $\Real$. 
    Then, the gradient of $E$ is given by
    \begin{align*}
        \nabla E(\gelem) = T\rightact[\gelem] \inertia^{-1} J\Bigl(\Phi(\gelem,\template),df\bigl(\Phi(\gelem,\template)\bigr)\Bigr).
    \end{align*}
\end{theorem}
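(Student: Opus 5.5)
The plan is to compute $dE_\gelem$ applied to a right-translated tangent vector $T\rightact[\gelem]\aelem$ and rewrite it, using the definition of the momentum map, as a pairing on $\LieAlgebra$. Right invariance of the metric then turns this pairing into an inner product, from which the gradient can be read off.

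\textbf{Step 1: represent the tangent vector by a curve.} Fix $\aelem \in \LieAlgebra$ and take a curve $c(t)$ in $\LieGroup$ with $c(0)=e$ and $\dot c(0)=\aelem$; the curve $c(t)=\exp(t\aelem)$ will do. Then $t\mapsto c(t)\gelem = \rightact[\gelem](c(t))$ passes through $\gelem$ at $t=0$ with velocity $T\rightact[\gelem]\aelem$. Therefore
\begin{align*}
\pair[T_\gelem^*\LieGroup,T_\gelem\LieGroup]{dE_\gelem}{T\rightact[\gelem]\aelem} = \frac{d}{dt}\Big|_{t=0} f\bigl(\GroupAction(c(t)\gelem,\template)\bigr).
\end{align*}

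\textbf{Step 2: use the action property to reduce to the infinitesimal generator.} Because $\GroupAction$ is a left action, $\GroupAction(c(t)\gelem,\template)=\GroupAction\bigl(c(t),\GroupAction(\gelem,\template)\bigr)$. Write $\template_\gelem = \GroupAction(\gelem,\template)$. The chain rule, together with the definition $\aelem_\GroupAction(\template_\gelem)=d\bigl(\GroupAction(\Cdot,\template_\gelem)\bigr)_e\aelem$, gives
\begin{align*}
\pair[T_\gelem^*\LieGroup,T_\gelem\LieGroup]{dE_\gelem}{T\rightact[\gelem]\aelem} = \pair[T^*_{\template_\gelem}\ShapeSpace,T_{\template_\gelem}\ShapeSpace]{df(\template_\gelem)}{\aelem_\GroupAction(\template_\gelem)}.
\end{align*}
By \cref{eq:mommap}, the right-hand side equals $\pair[\LieAlgebra^*,\LieAlgebra]{J(\template_\gelem,df(\template_\gelem))}{\aelem}$.

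\textbf{Step 3: use the metric to identify the gradient.} Set $\mu = J(\template_\gelem,df(\template_\gelem))\in\LieAlgebra^*$ and $\xi=\inertia^{-1}\mu\in\LieAlgebra$; this is well defined because $\inertia$ is symmetric positive-definite. Then $\pair[\LieAlgebra^*,\LieAlgebra]{\mu}{\aelem}=\inpro[e]{\xi}{\aelem}$, and right invariance of the metric gives $\inpro[e]{\xi}{\aelem}=\inpro[\gelem]{T\rightact[\gelem]\xi}{T\rightact[\gelem]\aelem}$. Combining Steps 1–3, the identity \cref{eq:grad} holds with $\nabla E(\gelem)$ replaced by $T\rightact[\gelem]\xi$, for every $\aelem\in\LieAlgebra$.

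\textbf{Step 4: uniqueness.} The map $T\rightact[\gelem]\colon\LieAlgebra\to T_\gelem\LieGroup$ is an isomorphism, so the vectors $T\rightact[\gelem]\aelem$ exhaust $T_\gelem\LieGroup$. Since the metric is nondegenerate, \cref{eq:grad} determines $\nabla E(\gelem)$ uniquely. Hence $\nabla E(\gelem)=T\rightact[\gelem]\inertia^{-1}\mu$, which is the claimed formula.

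The main point of care is Step 2. One has to use that the action is a left action, so that right translation on $\LieGroup$ corresponds to acting first with $\gelem$ and then with $c(t)$. The derivative is then taken at the identity in the first argument, with the base point $\template_\gelem$ rather than $\template$. This is exactly why $J$ is evaluated at $\GroupAction(\gelem,\template)$. The chain rule step also needs enough smoothness of $\GroupAction$ and $f$; this is what "sufficiently smooth" in the statement is meant to cover. In an infinite-dimensional setting one would additionally have to assume that $\inertia$ is invertible.
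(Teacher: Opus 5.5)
Your proof is correct. The paper does not prove this theorem itself; it imports the result from Balehowsky et al.\ \cite[Proposition~2.3]{BaKaMo2022}, and your argument is the standard computation behind that result. You differentiate along $t\mapsto c(t)g$, use the left-action property to move the derivative to the identity at the base point $\Phi(g,w)$, apply the definition of $J$, and then identify the gradient via right invariance and nondegeneracy. This gives a self-contained version of what the paper cites. Your caveat that invertibility of $\mathbb{A}$ must be assumed separately in infinite dimensions is the right one; in the paper's finite-dimensional $\operatorname{SO}(3)^N$ setting it is automatic.
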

Thus, computing the gradient comes down to computing the differential of the mapping $f\colon \ShapeSpace \to \Real$ and inserting it into the momentum map.
This is relatively simple, but to avoid confusion, we shall describe the general procedure. 
In the following, let $M,N, O$ be general smooth manifolds. 
The differential of a function $f \colon M \to N$ is a mapping that takes a point $p \in M$ and sends it to a linear mapping from $T_p M$ to $T_{f(p)} N$. 
In the case when $N$ is the real space $\Real$, we see that the differential thus results in an element of $T_p^* M$. 
Generally, for a general vector $v_p \in T_p M$, where we understand vectors as derivations acting on smooth functions on the underlying manifold, the differential  is defined as $dF(p)[v_p](\varphi) = v_p(\varphi \circ F)$, where we see that $f \circ F$ is a smooth function on $M$, so it can be acted upon by $v_p$.   
The chain rule is simply, given a $g \colon N \to O$, given by 
\begin{align*}
    d(g \circ f)(p) = dg\bigl(f(p)\bigr) \circ df(p) \colon T_p M \to T_{g \circ f(p) } O,
\end{align*}
passing through the tangent space $T_{f(p)}N$.

Equipped with \cref{th:gradient_general} to compute gradients on $\LieGroup$, we can turn to the \emph{gradient flow}. 
Given a functional $E\colon \LieGroup \to \Real$, the gradient flow induced by $E$ is the differential equation 
\begin{align}
    \label{eq:general_form}
    \dot \gelem = -\nabla E(\gelem).
\end{align}
A classical application of gradient flows is to find minimizers of a functional. 
Indeed, if $\gelem(t)$ is the path in $\LieGroup$ induced by  \cref{eq:general_form}, we have by the chain rule that 
\begin{align}
    \label{eq:en_decrease}
    \frac{\mathrm d}{\mathrm d t} E(\gelem) = \pair[T_\gelem^* \LieGroup,T_\gelem \LieGroup]{dE_\gelem}{\dot \gelem} =  \inpro[\gelem]{\nabla E(\gelem)}{\dot \gelem} =  -\inpro[\gelem]{\nabla E(\gelem)}{\nabla E(\gelem)} \leq 0
\end{align}
Of course, this might converge to a minimum, a set of critical points, or indeed not converge at all. 
The exact convergence analysis depends heavily on the choice of $\LieGroup$ and $E$, as must be done on a case-by-case basis.

An important question is if there exists solutions at all to the equation in \cref{eq:general_form}. 
In the setting when $\LieGroup$ is finite dimensional, the answer is affirmative. 
By Balehowsky et al. \cite[Theorem 2.7]{BaKaMo2022}, we have that local Lipschitz continuity of the gradient of $E$ is sufficient to ensure a unique global solution $\gelem\colon[0,\infty) \to \LieGroup$. 

We are now in position to formulate gradient flows directly on the group that solves problems of the form in \cref{eq:InDirectRegEnergyInGroup}. 
We start with a template shape $\template \in \ShapeSpace$ and let it be acted upon by elements $\gelem$ of $\LieGroup$. 
Now, we would like to find the element of $\ShapeSpace$ that (in some suitable sense) is the best reconstruction of some observed data $\target$ in $\DataSpace$. 
To achieve this, we consider first the unregularized case. 
After selecting an appropriate data fidelity
$\LossFunc_{\DataSpace} \colon \DataSpace \times \DataSpace \to \DataSpace$ and a forward model $\ForwardOp\colon \ShapeSpace \to \Real$, we construct the mapping 
$\template \mapsto \LossFunc_{\DataSpace}\bigl(\ForwardOp(\template),\target\bigr)$
from $\DataSpace$ to $\Real$. 
This serves as the mapping $f$ in \cref{th:gradient_general}, and so $E\colon \LieGroup \to \Real$ is just $\LossFunc_{\DataSpace}\bigl(\ForwardOp(\GroupAction(g,\template)),\target\bigr)$. 
Thus, computing the gradient amounts to computing the differential of $\LossFunc_{\DataSpace}\bigl(\ForwardOp(\template),\target\bigr)$.
When we write down the differential to compute $\nabla E$, it is important to record which mappings are between which spaces. 
To simplify, we use the notation $\LossFunc_{\data}\bigl(\ForwardOp(\template)\bigr) = \LossFunc_\DataSpace\bigl(\ForwardOp(\template), \data \bigr)$.
Note that this function maps $\ShapeSpace$ to $\Real$. 
By applying the chain rule of differentials, we thus obtain that 
\begin{align*}
    d\LossFunc_{\data}\bigl(\ForwardOp(\template)\bigr) = d\LossFunc_\DataSpace\bigl(\ForwardOp(\template), \data \bigr) \circ d\ForwardOp(\template),
\end{align*}
where we see that $d\LossFunc_\DataSpace(\ForwardOp(\template), \data)$ maps $T_{\ForwardOp(\template)} \DataSpace \to \Real$ (with the differential taken with respect to the first argument, holding $\data$ fixed), and $d \ForwardOp(\template)$ maps $T_{\template} \ShapeSpace$ to $T_{\ForwardOp(\template)} \DataSpace$. 
Note that the composition should be understood as compositions of  mappings between tangent spaces. 

The gradient of $E$ is
\begin{align*}
        \nabla E(\gelem) = T\rightact[\gelem] A^{-1} J\Bigl(\Phi(\gelem,\template),d\LossFunc_\DataSpace\bigl(\ForwardOp\bigl(\GroupAction(\gelem,\template)\bigr),y\bigr) \circ d\ForwardOp\bigl(\GroupAction(\gelem,\template)\bigr)\Bigr).
\end{align*}
In practice, when $\ShapeSpace$ and $\DataSpace$ are Hilbert spaces, 
we can compute the differential by the following steps:
\begin{enumerate}
    \item Compute the forward pass $\dataother = \ForwardOp(\template)$
    \item Compute the gradient of $\LossFunc_\DataSpace$ with respect to its first argument, denoted by $\nabla_{\dataother} \LossFunc_\DataSpace$, and evaluate it at $\dataother$.
    \item Then, pull back through the forward model by its adjoint, i.e., compute $d\LossFunc_\DataSpace(\template) = d\ForwardOp(\template)^* [\nabla_{\dataother} \LossFunc_\DataSpace]$ where $d\ForwardOp(\template)^*:T^*_{\ForwardOp(\template)}\DataSpace \to T^*_{\template}\ShapeSpace$ is the operator satisfying
 \[\pair[T_\template^* \ShapeSpace, T_\template \ShapeSpace]{d\ForwardOp(\template)^* \alpha}{v} = \pair[T_{\ForwardOp(\template)}^* \DataSpace,T_{\ForwardOp(\template)} \DataSpace]{\alpha}{d\ForwardOp(\template)[v]}
 \] for all $\alpha \in T_{\ForwardOp(\template)}^* \DataSpace$ and $v \in T_{\template} \ShapeSpace$.
\end{enumerate}
Regularization can be included in several ways. 
While no additional regularization is needed to prove for instance, existence of minimizers if the group is compact, a regularization can be added by letting $\LieGroup$ act on $\ShapeSpace$ and some other space simultaneously so that the geometric setting is recovered with $\ShapeSpace' = \ShapeSpace \times \hat \ShapeSpace$. 
This can be used to avoid for instance self-intersection or to weakly enforce desired behaviors in the reconstruction.

\section{Shape gradient flows for protein conformation reconstruction}

In this section, we adapt the general theory of indirect shape matching gradient flows to the setting of single-particle cryo–EM. 
First, we describe the relevant shape space, data space,  forward model, and Lie group. 
Then, we derive and analyze the resulting flow obtained by adapting \cref{th:gradient_general}.

\subsection{Protein structures and shape setting}
\label{ssec:protein_shape}
Consider the set of 3D arrangements of the \ch{C_{$\alpha$}} atoms in the backbone of a fixed protein, denoted by  $\AtmModSpace$. Its elements are arrays  $\bbpos=(\bbpos_1,\dots,\bbpos_{\numPep})\in\Real^{3\times \numPep}$ where $\numPep$ is the number of residues in the protein and each $\bbpos_i \in \Real^3$ is the position of the $i$:th \ch{C_{$\alpha$}} atom.
From such a representation, one can compute the electrostatic potential generated by the \ch{C_{$\alpha$}} atoms in the backbone.
It is represented by a map $\AtmToMapOp \colon \AtmModSpace \to \MapSpace$ and it is potentially a sufficiently good approximation of the 3D map for the entire protein.
We make the further simplifying assumption that the interatomic distances between the \ch{C_{$\alpha$}} atoms are constant over deformations. 
This allows for representing the backbone structure via the relative positions of the \ch{C_{$\alpha$}} atoms.
To define the shape space, we first take all relative positions of the  \ch{C_{$\alpha$}} atoms in the backbone, that is, an array in $\Real^{3\times (\numPep-1)}$:
\[
  (\bbpos_2-\bbpos_1,\dots,\bbpos_{\numPep}-\bbpos_{\numPep-1})
  \in\Real^{3\times (\numPep-1)}
  \quad\text{with}\quad
  (\bbpos_1,\bbpos_2, \dots,\bbpos_{\numPep})
  \in\AtmModSpace.
\]
This could be the shape space. However, it is important to have a space bijective to $\AtmModSpace$, so we supplement the relative positions with the position of the first \ch{C_{$\alpha$}} atom, i.e., our shape space $\ShapeSpace$ consists of $\numPep$-arrays of the form 
\[
  \relbbpos=(\bbpos_1,\bbpos_2-\bbpos_1,\dots,\bbpos_{\numPep}-\bbpos_{\numPep-1})
  \in\Real^{3\times \numPep}
  \quad\text{with}\quad
  (\bbpos_1,\bbpos_2, \dots,\bbpos_{\numPep})
  \in\AtmModSpace.
\]
As noted,  $\AtmModSpace$ and $\ShapeSpace$ are related via the bijection $\opStyle{M} \colon \ShapeSpace \to \AtmModSpace$ given by
\[
    \opStyle{M}(\relbbpos):=\Bigl(\relbbpos_1, \relbbpos_1+\relbbpos_2, \relbbpos_1+\relbbpos_2+\relbbpos_3,\dots,\sum_{j=1}^\numPep \relbbpos_j \Bigr)
\]
whose inverse $\opStyle{M}^{-1} \colon \AtmModSpace \to \ShapeSpace$  is
\[
    \opStyle{M}^{-1}(\bbpos)=(\bbpos_1, \bbpos_2-\bbpos_1,\dots,\bbpos_{\numPep}-\bbpos_{\numPep-1}).
\]

To define the Lie group that acts on elements of the shape space defined above, let first $\relbbpos, \relbbposother \in \ShapeSpace$ be two different backbone structures of the protein. 
Since these have the same interatomic distances, there are rotation matrices $\rotmat_1,\dots, \rotmat_{\numPep}\in\SO(3)$ such that $\relbbposother_i=\rotmat_i \relbbpos_i$. 
The deformation of the backbone $\relbbpos$ into $\relbbposother$ can then be represented by acting with the Lie group $\LieGroup := \bigl(\SO(3)\bigr)^\numPep$ on $\ShapeSpace$ via component-wise matrix-vector multiplication:
\[  
\GroupAction \colon \LieGroup \times \ShapeSpace \to \ShapeSpace
\quad\text{where}\quad
\GroupAction(\gelem,\relbbpos) := 
\bigl( \rotmat_1\relbbpos_1, \ldots, \rotmat_{\numPep}\relbbpos_{\numPep} \bigr),
\quad\text{with $\gelem=(\rotmat_1,\ldots,\rotmat_{\numPep} ) \in \LieGroup$.}
\]
This means that we act with the Lie group $\bigl(\SO(3)\bigr)^\numPep$ on the relative atom positions for the \ch{C_{$\alpha$}} atoms in the backbone. 

Protein conformations are only indirectly observable by \ac{TEM} imaging, so we decide on a forward operator to model how a deformable object in  $\ShapeSpace$ maps to an observable 2D \ac{TEM} image. 
We view 2D \ac{TEM} images as digitized functions in $\LpSpace^2(\Real^2)$, the space of square-integrable functions on $\Real^2$.
In the following, we have $\numImgs$ \ac{TEM} images $\data_i \in \LpSpace^2(\Real^2)$ for $i=1,\ldots, \numImgs$. 
Each \ac{TEM} image $\data_i$ corresponds to a backbone structure $\relbbpos_i \in \ShapeSpace$. 
The forward map models the data generation process. 
First, $\relbbpos_i$ is mapped to the corresponding 3D point cloud representation in $\AtmModSpace$ via $\opStyle{M} \colon \ShapeSpace\to \AtmModSpace$. 
Next, we create an approximate 3D map that is the input for the model of the \ac{TEM} image formation.   
To achieve this, we replace each atom in the backbone with a 3D Gaussian centered at the atom.
Formally, this is a map $\AtmToMapOp \colon \AtmModSpace\to \MapSpace$ where $\MapSpace = \LpSpace^2(\Real^3)$. 

Then, we project onto the detector plane via the parallel beam ray transform $\opStyle{P}\colon \MapSpace \to \LpSpace^2(\Real^2)$.
Finally, we generate a 2D \ac{TEM} image from the projected 3D map by an operator $\opStyle{C}\colon\LpSpace^2(\Real^2)\to\LpSpace^2(\Real^2)$ that models the microscope optics. 
We define $\opStyle{C}$ as a convolution with a chosen point spread function (PSF) $h$.
The Fourier transform of the PSF is called a contrast transfer function (CTF).
Following Erickson and Klug \cite{Erickson1970} (see also Fanelli and Öktem \cite{Fanelli2008} and Yang et al. \cite{Yang2009}), we use the following definition of the CTF
\begin{equation}\label{eq:ctf}
     \hat h(\xi) = - \Bigl(\sqrt{1-\alpha^2} \sin\bigl(\gamma(\xi)\bigr) + \alpha \cos\bigl(\gamma(\xi)\bigr)\Bigr)e^{-\frac{B}{2} |\xi|^2}
\end{equation}
where 
\begin{align*}
\gamma(\xi) = 2\pi \left(-\frac{C_s\lambda^3 |\xi|^4}{4} + \frac{\Delta z \lambda |\xi|^2}{2}\right).
\end{align*}
Here, $\alpha \in [0,1]$ is the amplitude contrast ratio, $B$ is the experimental $B$-factor, $\Delta z $ the defocus, $C_s $ is the spherical aberration, and $\lambda$ is the electron wave length. 
The CTF gives rise to a corresponding map $\opStyle{C}\colon\LpSpace^2(\Real^2) \to \LpSpace^2(\Real^2)$ constructed by the following steps:
\begin{enumerate}
    \item Take the Fourier transform $\hat f $ of the input $f \in \DataSpace$. 
    \item Multiply $\hat f$ with $\hat h$ pointwise. 
    \item Compute the inverse Fourier transform of the product to obtain $\opStyle{C}f$.
\end{enumerate}
Note that $\opStyle{C}$ is a convolution of the input with the point spread function, and is thus a linear mapping. 
The map $\opStyle{C}$ is self-adjoint.
To see this, we only need to observe that multiplication with $\hat h$ is self-adjoint, as the Fourier transform is an isometry of $\LpSpace^2(\Real^2)$ and thus has adjoint equal to its inverse.
Multiplication with $\hat h$ is self-adjoint since $\hat h$ is real:
\begin{equation*}
    (u,\hat hv)=\int_{\Real^2}u(\xi)\overline{\bigl(\hat h(\xi)v(\xi)\bigr)}d\xi=\int_{\Real^2}u(\xi)\hat h(\xi)\overline{v(\xi)}d\xi=(\hat hu,v).
\end{equation*}
See \cite{Fanelli2008, Oktem:2015aa} for example for more details on the image formation model.

The forward operator $\ForwardOp \colon \ShapeSpace \to \DataSpace$ is now given by (see also \cref{fig:forward_model} for an illustration) 
\begin{equation}\label{eq:TEMtotFwdOp}
\ForwardOp \coloneqq \opStyle{C}\circ\opStyle{P}\circ\AtmToMapOp\circ\opStyle{M}.
\end{equation}
When $\opStyle{P}$ is the parallel beam ray transform, then one can evaluate the first three maps in \cref{eq:TEMtotFwdOp} in a computationally feasible manner as follows: Perform a geometric projection of the input 3D point cloud along the \ac{TEM} optical axis onto the 2D  \ac{TEM} detector plane. Then, apply the 2D analogue of $\AtmToMapOp$ to the 2D point cloud. \cite{Oktem:2015aa}
\begin{figure}
    \centering
    \includegraphics[width=0.7\linewidth]{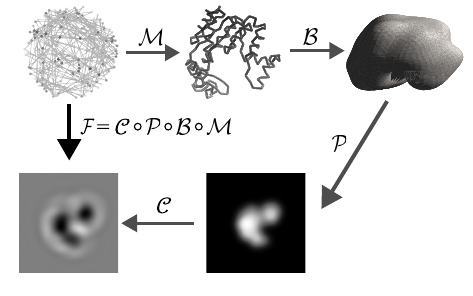}
    \caption{Illustration of the forward model. First, the relative coordinates are mapped to absolute coordinates. Then, a Gaussian blob is added to each bead, after which a projection is applied. Finally, one applies the CTF.}
    \label{fig:forward_model}
\end{figure}

The data consists of a set of observed images $\data_1,\ldots,\data_\numImgs$. 
We first make the simplifying assumptions that
\begin{enumerate}
    \item The \emph{orientation} is known in each image.
    \item The data contains only one conformation, i.e., is conformationally homogeneous.
\end{enumerate} allowing us to view the full data space as the direct sum 
\begin{align*}
    \DataSpace = \bigoplus_{j=1}^\numImgs \LpSpace^2(\Real^2),
\end{align*}
In the following, we denote by $\DataSpace_j$ the $j$th component of $\DataSpace$ and by 
 $\ForwardOp_j\colon \DataSpace_j \to \Real$ be the $j$th component of  $\ForwardOp\colon \ShapeSpace \to \DataSpace$. It is given by 
\begin{equation}\label{eq:TEMtotFwdOp_pw}
\ForwardOp_j \coloneqq \opStyle{C}\circ\opStyle{P}_j\circ\AtmToMapOp\circ\opStyle{M},
\end{equation}
where $\opStyle{P}_j$ is the projection along the beam direction generating the image.
As discussed above, this is in practice obtained by
the equivalent mapping
\begin{equation}\label{eq:TEMtotFwdOp_simp}
\ForwardOp_j \coloneqq \opStyle{C}\circ \AtmToMapOpPlane \circ\opStyle{\pi}_j\circ\opStyle{M},
\end{equation}
where 
\begin{equation}\label{eq:gauss}
\AtmToMapOpPlane(\amodplane)(\Cdot) =
    \sum_{i=1}^{\numPep} s_i \gauss_{\amodplane_i, \sigma_i}(\Cdot).
\end{equation}
Here, $\gauss_{\mu,\sigma}\colon \Real^2 \to \Real$ is the Gaussian probability density function 
\begin{align*}
    \gauss_{\mu,\sigma}(x) = \frac{1}{2\pi\sigma^2} \exp\left(-\|x-\mu\|^2/(2\sigma^2)\right),
\end{align*}
$s_1,\dots,s_{\numPep}$ are positive weights 
and $\sigma_1^2,\dots,\sigma_{\numPep}^2$ are the variances of the respective Gaussians. 
The mapping $\pi_j \colon \AtmModSpace \to \Real^2$ is obtained by applying a rotation $\rotmat_i$ to $\opStyle{M}(\template)$, and then removing the final component, i.e., 
\begin{align*}
    \pi_j(x) = \begin{pmatrix} (\rotmat_j x)_1, (\rotmat_j x)_2 \end{pmatrix}.
\end{align*}

\subsection{Derivation of the gradient flow
}
To apply the gradient flow approach of \cref{sec:gfs_ism} to the present problem, we first need to formulate it as a shape matching problem.
Much of the groundwork was already laid in \cref{ssec:protein_shape}. 
It remains to specify an appropriate matching functional.
We first focus on the unregularized case, so
the goal is thus to find an element $\gelem \in \SO(3)^\numPep$ that minimizes the functional
\begin{equation}
    \label{eq:so3_prob_unreg}
   \gelem \mapsto \LossFunc_\data(\gelem) \coloneqq \sum_{j=1}^\numImgs \LossFunc_{\DataSpace_j}\Bigl(\ForwardOp_j\bigl(\GroupAction(\gelem, \template)\bigr), \data_j\Bigr)
\end{equation}
over $\SO(3)^\numPep$. 
Here, $\LossFunc_{\DataSpace_j}$ denotes the $j$th component of the loss. 
We now present the framework of \cref{sec:gfs_ism} in the protein reconstruction setting of \cref{ssec:protein_shape}. 

The infinitesimal generator of the action is simply $\hat \aelem \template$, where $\hat \aelem \in \SOLieAlgebra(3)^\numPep$ consists of $\numPep$ skew-symmetric matrices. 
Component-wise, we can apply the hat-map that sends a skew-symmetric matrix $\hat \aelem_i$ to a vector $\aelem_i$ in $\Real^3$ and instead view the Lie algebra element as a list of such vectors. The $i$th component of the infinitesimal action $\hat \aelem \template$ is then just $\aelem_i \times \template_i$, where $\times$ is the $\Real^3$ cross product. 
The momentum map likewise separates into components, where its $i$th component acts on the $i$th component of $(\template,p) \in T^*\ShapeSpace$ by 
\begin{equation}
    \label{eq:so3_mom}
    J(\template,p)_i = \template_i \times p_i. 
\end{equation}
Further, the tangent lifted action $T\rightact[\gelem]$ just becomes matrix multiplication from the right. 
We equip the algebra with an inner product defined by an inertia operator, that is, a mapping $\inertia \colon \SOLieAlgebra(3)^\numPep \to \bigl(\SOLieAlgebra(3)^\numPep\bigr)^* $.
We take $\inertia$ as a list of symmetric and positive definite  operators each acting on only one component, i.e., $\inertia = (\inertia_1, \ldots \inertia_\numPep)$. 
The inner product is then given by $\langle \aelem, \aelemother \rangle_\inertia = \traceop(\inertia \aelem, \aelemother)$. We denote the corresponding norm by $\|\cdot \|_\inertia$. 

The metric on $\SO(3)^\numPep$ is thus defined by 
\begin{align}
    \label{eq:metric_so3}
    \inpro[g]{\dot \gelem_1}{\dot \gelem_2} = \traceop(\inertia(\dot \gelem_1 \gelem^{-1}), \dot \gelem_2 \gelem^{-1}),
\end{align}
where $\dot \gelem_1,\dot \gelem_2 \in T_\gelem \SO(3)^\numPep$. 
The corresponding norm is given by 
\begin{align*}
    \label{eq:norm_so3}
    \|\dot \gelem\|_\gelem^2 =  \traceop(\inertia(\dot \gelem \gelem^{-1}), \dot \gelem \gelem^{-1}) = \|\dot \gelem \gelem^{-1}\|^2_\inertia.
\end{align*}

Then, the gradient flow is given by the following corollary to \cref{th:gradient_general}
\begin{corollary}\label{corr:so3grad}
The flow equation 
    \begin{equation}
    \label{eq:gf_so3}
    \dot \gelem = -\eta \gelem
    \end{equation}
holds in which the $i$th component of $\eta$ is given by 
\[
    \inertia_i\eta_i=\GroupAction(\gelem, \template)_i\times\biggl(\sum_j\Bigl(d \ForwardOp_j\bigl(\GroupAction(\gelem, \template)\bigr)\Bigr)^*   \nabla_{\dataother_j} \LossFunc_{\DataSpace_j}(\dataother_j, \data_j) \Big|_{\dataother_j = \ForwardOp_j(\GroupAction(\gelem, \template))} \biggr)_i
\]
where
\begin{align*}
   \bigl(d\mathcal F_j(v)\bigr)^*f=\mathcal M^* \, \pi_j^* \,d\AtmToMapOpPlane\bigl(\mathcal M(v)\bigr)^*\,\mathcal Cf,
\end{align*}
with 
  \begin{align*}
          &d\AtmToMapOpPlane(\amodplane)^*\data = \begin{pmatrix} \data * s_1\nabla \gauss_{0,\sigma_1}(\amodplane_1), \ldots, \data* s_\numPep\nabla \gauss_{0,\sigma_\numPep}(\amodplane_\numPep)  \end{pmatrix} \in \Real^{2 \times \numPep },\\
          & \pi_j^* \begin{pmatrix} \amodplane_1, \ldots, \amodplane_\numPep \end{pmatrix} = \begin{pmatrix} \rotmat_j^\intercal \pi^* \amodplane_1, \ldots \rotmat_j^\intercal \pi^* \amodplane_\numPep \end{pmatrix}, \quad \text{with } \pi^* = \begin{bmatrix} 1 & 0 \\ 0 & 1 \\ 0 & 0 \end{bmatrix}\\
          & (\opStyle{M}^* \amod)_i = \sum_{k = i}^\numPep \amod_k.
      \end{align*}
\end{corollary}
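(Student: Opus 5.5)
The plan is to specialize \cref{th:gradient_general} to $\LieGroup=\SO(3)^\numPep$ acting on $\ShapeSpace=\Real^{3\times\numPep}$, with $f(\relbbpos)=\sum_j \LossFunc_{\DataSpace_j}(\ForwardOp_j(\relbbpos),\data_j)$. Then $E(\gelem)=f(\GroupAction(\gelem,\template))$ is exactly \cref{eq:so3_prob_unreg}. The gradient flow \cref{eq:general_form} reads $\dot\gelem=-T\rightact[\gelem]\inertia^{-1}J\bigl(\GroupAction(\gelem,\template),df(\GroupAction(\gelem,\template))\bigr)$. Since $T\rightact[\gelem]$ is right multiplication by $\gelem$ in the matrix group, this is $\dot\gelem=-\eta\gelem$ with $\eta=\inertia^{-1}J(\cdot,\cdot)$. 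Equivalently, $\inertia_i\eta_i=J(\GroupAction(\gelem,\template),df)_i$, because $\inertia$ acts componentwise. Two ingredients then remain: the momentum map and the differential $df$.

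\textbf{Momentum map.} Differentiating $\GroupAction(\gelem,\template)_i=\rotmat_i\template_i$ at $e$ gives $\aelem_\GroupAction(\template)_i=\hat\aelem_i\template_i=\aelem_i\times\template_i$. Inserting this into \cref{eq:mommap} gives
\[
\sum_i p_i\cdot(\aelem_i\times\template_i)=\sum_i \aelem_i\cdot(\template_i\times p_i)
\]
by the scalar triple product. Under the hat identification this is \cref{eq:so3_mom}. There is a sign/orientation convention in identifying $\SOLieAlgebra(3)^*$ with $\Real^3$; I will fix it so that the pairing is the Euclidean dot product, consistent with \cref{eq:so3_mom}.

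\textbf{Differential of $f$.} Because $\ShapeSpace$ and $\DataSpace_j$ are Hilbert spaces, I follow the three-step recipe in \cref{sec:gfs_ism}. This gives $df(\relbbpos)=\sum_j d\ForwardOp_j(\relbbpos)^*\nabla_{\dataother_j}\LossFunc_{\DataSpace_j}$ evaluated at $\dataother_j=\ForwardOp_j(\relbbpos)$, which is the bracketed expression in the corollary. The adjoint of the composition \cref{eq:TEMtotFwdOp_simp} follows from the chain rule, which reverses the order: $d\ForwardOp_j^*=\opStyle{M}^*\pi_j^*\,d\AtmToMapOpPlane^*\,\opStyle{C}^*$. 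Here $\opStyle{M}$ and $\pi_j$ are linear, so they are their own differentials, and $\opStyle{C}^*=\opStyle{C}$ by the self-adjointness shown earlier.

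Each linear factor is then checked separately:
\begin{itemize}
\item \emph{$\opStyle{M}^*$.} $\opStyle{M}$ is the lower-triangular cumulative-sum map, $(\opStyle{M}\relbbpos)_k=\sum_{i\le k}\relbbpos_i$. Swapping sums in $\sum_k \amod_k\cdot(\opStyle{M}\relbbpos)_k$ gives $(\opStyle{M}^*\amod)_i=\sum_{k\ge i}\amod_k$.
\item \emph{$\pi_j^*$.} $\pi_j$ acts pointwise as $x\mapsto P\rotmat_j x$, where $P=(\pi^*)^\intercal$ drops the last coordinate. Its adjoint is therefore $\rotmat_j^\intercal\pi^*$ applied to each point.
\end{itemize}

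\textbf{The nonlinear step $d\AtmToMapOpPlane(\amodplane)^*$.} This is the main computation, and I expect it to be the main obstacle, mostly in keeping signs and conventions consistent. The partial derivative of $\gauss_{\amodplane_i,\sigma_i}(x)$ with respect to $\amodplane_i$ is $-\nabla\gauss_{0,\sigma_i}(x-\amodplane_i)$. Hence
\[
\bigl\langle \data,\,d\AtmToMapOpPlane(\amodplane)[\delta]\bigr\rangle=-\sum_i s_i\,\delta_i\cdot\int\data(x)\nabla\gauss_{0,\sigma_i}(x-\amodplane_i)\,dx .
\]
Using that $\nabla\gauss_{0,\sigma}$ is odd, $-\int\data(x)\nabla\gauss_{0,\sigma}(x-\amodplane_i)\,dx=\int\data(x)\nabla\gauss_{0,\sigma}(\amodplane_i-x)\,dx=(\data*\nabla\gauss_{0,\sigma_i})(\amodplane_i)$. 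This yields the stated formula, interpreted as the convolution evaluated at $\amodplane_i$.

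Finally, I assemble the pieces in the order adjoint of $\opStyle{C}$, then $\AtmToMapOpPlane$, then $\pi_j$, then $\opStyle{M}$. For the flow to be well defined, I will also note that $\gauss$ is smooth, so $\nabla E$ is locally Lipschitz.
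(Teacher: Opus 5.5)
Your proposal is correct and follows the paper's proof. You specialize \cref{th:gradient_general} with $T\rightact[\gelem]$ acting as right multiplication by $\gelem$, pull the loss gradient back through $\opStyle{C}$ (self-adjoint), $\AtmToMapOpPlane$, $\pi_j$ and $\opStyle{M}$ by the chain rule, and insert the result into the momentum map \cref{eq:so3_mom}.

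The only difference is that you derive \cref{eq:so3_mom} from the infinitesimal action via the scalar triple product, and compute $d\AtmToMapOpPlane(\amodplane)^*$ directly from the oddness of $\nabla\gauss_{0,\sigma}$. The paper instead takes the former as given and cites \cite{Jansson2025} for the latter.
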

\begin{proof}
    The first step is to compute the differential of the functional in \cref{eq:so3_prob_unreg}
    as a function from $\ShapeSpace$ to $\Real$. 
    It coincides with the gradient of the functional in \cref{eq:so3_prob_unreg}, which is the element of $\DataSpace$ with components
    \begin{align*}
       \bigl(\nabla_{\dataother} \LossFunc_{\DataSpace}\bigr)_j =   \nabla_{\dataother_j} \LossFunc_{\DataSpace_j}(\dataother_j, \data_j), 
       \qquad j = 1,\dots,\numImgs,
    \end{align*}
    evaluated at $\dataother_j = \ForwardOp_j(\GroupAction(\gelem,\template))$.
    Then, it remains to determine the adjoint of the differential of the forward mapping. 
    First, recall that mapping $\opStyle{C}$ is self-adjoint. 
      Then, by \cite{Jansson2025}, the adjoint of the differential of $\AtmToMapOpPlane$ is  
      \begin{align*}
          d\AtmToMapOpPlane(\amodplane)^*\data = \begin{pmatrix} \data * s_1\nabla \gauss_{0,\sigma_1}(\amodplane_1), \ldots, \data* s_\numPep\nabla \gauss_{0,\sigma_\numPep}(\amodplane_\numPep)  \end{pmatrix} \in \Real^{2 \times \numPep }. 
      \end{align*}
     
      The projection $\pi_i$ is  computed in two steps. 
      First, a point cloud $\amod$ is rotated by $\rotmat_j$. Then, it is projected by removing the $z$-coordinate. 
      Thus, the adjoint of $\pi_j$ is given by 
      \begin{align*}
          \pi_j^* \begin{pmatrix} \amodplane_1, \ldots, \amodplane_\numPep \end{pmatrix} = \begin{pmatrix} \rotmat_j^\intercal \pi^* \amodplane_1, \ldots \rotmat_j^\intercal \pi^* \amodplane_\numPep \end{pmatrix}, \quad \text{where } \pi^* = \begin{bmatrix} 1 & 0 \\ 0 & 1 \\ 0 & 0 \end{bmatrix}.
      \end{align*}
       Finally, the $i$th coordinate of adjoint of the relative-to-absolute coordinate map $\opStyle{M}$ applied to $\amod = \begin{pmatrix}\amod_1, \ldots, \amod_\numPep\end{pmatrix} \in \AtmModSpace = \Real^{3 \times \numPep}$ is given by 
       \begin{align*}
           (\opStyle{M}^* \amod)_i = \sum_{k = i}^\numPep \amod_k.
       \end{align*}
       Inserting the above into the momentum map in \cref{eq:so3_mom} yields the gradient flow.
\end{proof}
We remark that the geometric setting extends naturally to the multi-chain setting, i.e., where the protein does not consist of a single chain, but rather of $L$ chains, where we denote the number of residues in chain $i$ by $\numPep_i$. 
In this setting, the shape space of relative coordinates becomes $V_{\mathrm{multi}} = \Real^{3 \times \numPep_1} \times \ldots \times \Real^{3 \times \numPep_L}$, and it is acted upon by $\SO(3)^\numPep_1 \times \ldots \times \SO(3)^\numPep_L$. 
The difference lies in the forward map, in which the relative-to-absolute coordinate map is replaced by a chain-wise mapping.
Once the absolute positions are known, we can proceed as before and apply the same forward model as in the single-chain case. 

We now consider some theoretical properites of the gradient flow in \cref{eq:gf_so3} and the minimization problem in \cref{eq:so3_prob_unreg}. 
First, note that \cref{eq:gf_so3} admits global existence. \begin{proposition}\label{prop:so3_wellp}
    The gradient flow in \cref{eq:gf_so3} admits a unique global solution $\gelem\colon [0,\infty) \to \LieGroup$.
\end{proposition}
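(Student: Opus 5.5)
We reduce the statement to the existence result of Balehowsky et al.\ \cite[Theorem 2.7]{BaKaMo2022} quoted in \cref{sec:gfs_ism}. For a finite-dimensional Lie group, that result gives a unique global solution $\gelem\colon[0,\infty)\to\LieGroup$ of \cref{eq:general_form} as soon as the gradient of $E$ is locally Lipschitz. Here $\LieGroup=\SO(3)^\numPep$ is finite-dimensional, and by \cref{corr:so3grad} the right-hand side of \cref{eq:gf_so3} is $-\eta(\gelem)\gelem = -\nabla E(\gelem)$ with $E=\LossFunc_\data$ from \cref{eq:so3_prob_unreg}. So it suffices to show that $\gelem\mapsto \eta(\gelem)\gelem$ is locally Lipschitz. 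We will in fact show it is $C^1$ on the embedded submanifold $\SO(3)^\numPep\subset\Real^{3\times3\numPep}$, which is enough.

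The regularity argument proceeds along the composition in \cref{corr:so3grad}.
\begin{enumerate}
\item The action $\gelem\mapsto\GroupAction(\gelem,\template)$ is polynomial, hence smooth.
\item $\opStyle{M}$, $\opStyle{M}^*$, $\pi_j$ and $\pi_j^*$ are fixed linear maps between finite-dimensional spaces.
\item The map $\amodplane\mapsto \AtmToMapOpPlane(\amodplane)\in\LpSpace^2(\Real^2)$ is smooth in the centers $\amodplane_i$, because translates of a Gaussian depend smoothly on the shift in $\LpSpace^2$.
\item $\opStyle{C}$ is a bounded linear operator, since $|\hat h|\le 1$. So $\ForwardOp_j$ is a smooth map from the finite-dimensional space $\ShapeSpace$ into $\LpSpace^2(\Real^2)$.
\item Taking the standard choice $\LossFunc_{\DataSpace_j}(\dataother,\data)=\tfrac12\|\dataother-\data\|^2$ (or, more generally, assuming $\nabla_{\dataother}\LossFunc_{\DataSpace_j}$ is locally Lipschitz), the residual $\ForwardOp_j(\GroupAction(\gelem,\template))-\data_j$ depends smoothly on $\gelem$.
\item The entries $\bigl(\data * s_i\nabla\gauss_{0,\sigma_i}\bigr)(\amodplane_i)$ of $d\AtmToMapOpPlane(\amodplane)^*\data$ are $\LpSpace^2$ inner products of $\data$ with smooth translates of $\nabla\gauss$. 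They are therefore jointly smooth in $(\amodplane,\data)$, with bounds controlled by $\|\data\|_{\LpSpace^2}$.
\item The cross product with $\GroupAction(\gelem,\template)_i$ is bilinear, $\inertia_i^{-1}$ is a fixed linear map, and multiplying by $\gelem$ is bilinear.
\end{enumerate}
Composing these, $\gelem\mapsto\eta(\gelem)\gelem$ is $C^1$, hence locally Lipschitz.

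The main technical obstacle is step 6: verifying differentiability of the pairing $(\amodplane,\data)\mapsto \data*\nabla\gauss(\amodplane)$ uniformly on bounded sets. We handle it by writing it as $\int \data(x)\nabla\gauss_{0,\sigma}(\amodplane-x)\,dx$ and differentiating under the integral sign. The justification is that every derivative of the Gaussian is in $\LpSpace^2$, with norm independent of the shift, so Cauchy--Schwarz dominates the difference quotients.

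A simpler alternative avoids the local-to-global step entirely. Because $\SO(3)^\numPep$ is compact, a $C^1$ vector field on it is globally Lipschitz and bounded. The Picard--Lindelöf theorem then gives a unique local solution, and since the solution stays in a compact set it cannot blow up, so it extends to $[0,\infty)$. Uniqueness follows from the Lipschitz property, and the energy inequality \cref{eq:en_decrease} is not even needed. We will present the argument via \cite[Theorem 2.7]{BaKaMo2022} and mention the compactness argument as confirmation.
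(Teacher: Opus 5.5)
Your proposal is correct and follows the paper's route: the paper's proof is a one-line appeal to \cite[Theorem 2.7]{BaKaMo2022}, justified by the remark that all components of \cref{eq:gf_so3} are smooth. Your steps 1--7 spell out that smoothness, and your hypothesis on $\nabla_{\dataother}\LossFunc_{\DataSpace_j}$ makes explicit an assumption the paper leaves implicit; the compactness plus Picard--Lindel\"of alternative you mention is also valid.
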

\Cref{prop:so3_wellp} is a direct consequence of \cite[Theorem 2.7]{BaKaMo2022}, which holds since all components of \cref{eq:gf_so3} are smooth. 
Further, by virtue of the compactness of $\SO(3)^\numPep$, there exists a minimizer, i.e., for each $\data = (\data_1, \ldots, \data_\numImgs)$, there is a minimizer $\gelem_{\data}$. 

We now consider stability, convergence in the sense of vanishing observational noise (cf. \cite[Propositions~2.2 and 2.3]{Lang2019}), and convergence of the flow to a set of critical points.
We put a standing assumption on the data fidelity. 
\begin{assumption}\label{ass:BasicAssumption}
    The data fidelity $\LossFunc_\DataSpace \colon \DataSpace \times \DataSpace \to \Real$ is assumed to satisfy:
    \begin{enumerate}
        \item[A1] The component-wise losses $\LossFunc_{\DataSpace_j}$ are jointly continuous in both arguments.
        \item[A2] For fixed $\data \in \DataSpace$, $\gelem \mapsto \LossFunc_\data(\gelem)$ is $C^2$ on $\operatorname{SO}(3)\numPep$.
        \item[A3] There is an equivalence relation $\sim$ on $\DataSpace$ such that $\LossFunc_\DataSpace(\dataother,\data) = 0 \iff \dataother \sim \data$.
    \end{enumerate}
\end{assumption}
The main proof vehicle is the compactness of the underlying group. 
\begin{proposition}[Stability]
    Assume $\data^n \to \data \in \DataSpace$ and let $(\gelem^n)_{n=1}^\infty$ be the sequence where $\gelem^n$ is a minimizer of $\LossFunc_{\data^n}$.  Then, there is a subsequence $(\gelem^{n_k})_{k=1}^\infty$ converging to some minimizer $\gelem^* \in \argmin_{\gelem} \LossFunc_\data(\gelem)$.
\end{proposition}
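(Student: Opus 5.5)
Since $\SO(3)^\numPep$ is compact, the sequence $(\gelem^n)_{n=1}^\infty$ has a subsequence $(\gelem^{n_k})_{k=1}^\infty$ converging to some $\gelem^* \in \SO(3)^\numPep$. It then remains to show that $\gelem^*$ minimizes $\LossFunc_\data$. The plan is the standard direct-method comparison argument: compare $\LossFunc_{\data^{n_k}}(\gelem^{n_k})$ with $\LossFunc_{\data^{n_k}}(\gelem)$ for an arbitrary fixed $\gelem \in \LieGroup$, and pass to the limit on both sides using joint continuity (A1).

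First we record the continuity of the map $(\gelem,\data) \mapsto \LossFunc_\data(\gelem)$ on $\LieGroup \times \DataSpace$. The group action $\gelem \mapsto \GroupAction(\gelem,\template)$ is continuous, since it is componentwise matrix-vector multiplication. The forward operators $\ForwardOp_j = \opStyle{C}\circ\AtmToMapOpPlane\circ\pi_j\circ\opStyle{M}$ are continuous from $\ShapeSpace$ to $\LpSpace^2(\Real^2)$. Here $\opStyle{M}$ and $\pi_j$ are linear and finite-dimensional, and $\opStyle{C}$ is a bounded Fourier multiplier because $|\hat h|\le 1$. The map $\amodplane \mapsto \sum_i s_i \gauss_{\amodplane_i,\sigma_i}$ is continuous into $\LpSpace^2$, since translation is continuous in $\LpSpace^2$. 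Composing with A1, each term $\LossFunc_{\DataSpace_j}\bigl(\ForwardOp_j(\GroupAction(\gelem,\template)),\data_j\bigr)$ is jointly continuous in $(\gelem,\data_j)$. Since the sum over $j$ is finite, $\LossFunc_\data(\gelem)$ is jointly continuous in $(\gelem,\data)$.

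Now fix any $\gelem \in \LieGroup$. By minimality, $\LossFunc_{\data^{n_k}}(\gelem^{n_k}) \le \LossFunc_{\data^{n_k}}(\gelem)$ for every $k$. Letting $k \to \infty$ and using joint continuity along $(\gelem^{n_k},\data^{n_k}) \to (\gelem^*,\data)$ on the left and along $(\gelem,\data^{n_k})\to(\gelem,\data)$ on the right gives $\LossFunc_\data(\gelem^*) \le \LossFunc_\data(\gelem)$. As $\gelem$ was arbitrary, $\gelem^* \in \argmin_\gelem \LossFunc_\data(\gelem)$. Assumptions A2 and A3 are not needed here.

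The only step requiring care is the joint continuity. The left-hand side involves the simultaneous convergence of $\gelem^{n_k}$ and $\data^{n_k}$, so separate continuity in each argument would not suffice. This is why A1 is phrased as joint continuity, and why the continuity of $\ForwardOp_j$ into $\LpSpace^2(\Real^2)$ must be checked rather than assumed. I expect that check to be routine; the part that deserves a sentence is the $\LpSpace^2$-continuity of the Gaussian blob map $\AtmToMapOpPlane$ under shifts of the centres.
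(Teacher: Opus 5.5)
Your proposal is correct and takes essentially the same route as the paper: you extract a convergent subsequence by compactness of $\SO(3)^\numPep$, compare $\LossFunc_{\data^{n_k}}(\gelem^{n_k}) \le \LossFunc_{\data^{n_k}}(\tilde\gelem)$ for an arbitrary $\tilde\gelem$, and pass to the limit on both sides using joint continuity. Your explicit check that each $\ForwardOp_j$ is continuous into $\LpSpace^2(\Real^2)$ (boundedness of $\opStyle{C}$ via $|\hat h|\le 1$, and $\LpSpace^2$-continuity of translating the Gaussians) fills in what the paper compresses into ``by the smoothness of the forward map''.
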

\begin{proof}
    By the compactness of $\LieGroup$, we extract a subsequence $\gelem^{n_k}$ converging to some $\gelem^* \in \LieGroup$.
    For any $\tilde{\gelem} \in \SO(3)^\numPep$, it holds that \begin{align}
        \label{eq:stab_intermediate}
        \LossFunc_{\data^{n_k}}(\gelem^{n_k}) \leq  \LossFunc_{\data^{n_k}}(\tilde{\gelem}).
    \end{align}
    By the smoothness of the forward map, the group action and joint continuity of the fidelity, both sides of \cref{eq:stab_intermediate} converge, and passing to the limit yields that
    \begin{align*}
        \LossFunc_\data(\gelem^*) \leq  \LossFunc_\data(\tilde{\gelem}).
    \end{align*}
    The result directly follows from the above as $\tilde{\gelem}$ was arbitrary. 
\end{proof}
We now prove convergence as observational noise vanishes. 
\begin{proposition}[Convergence]\label{prop:conv_vann}
    Let $\data \in \DataSpace$ be given and suppose that there is $\hat{\gelem}$ such that $\ForwardOp\bigl(\GroupAction(\hat{\gelem},\template)\bigr) \sim \data$.
    Next, let $(\data^n)_{n=1}^\infty \to \data$ where $\gelem^n$ is a minimizer of $\LossFunc_{\data^n}$. 
    Then, $(\gelem^n)_{n=1}^\infty$ has a subsequence converging to some $\gelem^*$ with $\ForwardOp(\GroupAction(\gelem^*, \template)) \sim \data$. In other words, the reconstruction reproduces the data up to equivalence relation $\sim$ in \cref{ass:BasicAssumption}.
\end{proposition}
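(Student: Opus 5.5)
The plan is to combine the compactness argument from the Stability proposition with the zero-set characterization in A3 of \cref{ass:BasicAssumption}. Compactness of $\LieGroup = \SO(3)^\numPep$ lets us extract a subsequence $\gelem^{n_k} \to \gelem^*$. It then remains to show $\LossFunc_\data(\gelem^*) = 0$. Once that holds, A3 gives $\ForwardOp\bigl(\GroupAction(\gelem^*,\template)\bigr) \sim \data$.

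The key steps, in order, are as follows.

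\emph{Step 1.} Since $\gelem^{n}$ minimizes $\LossFunc_{\data^{n}}$, we have the comparison inequality
\[
0 \le \LossFunc_{\data^{n_k}}(\gelem^{n_k}) \le \LossFunc_{\data^{n_k}}(\hat\gelem).
\]
The lower bound presumes the fidelity is nonnegative. This is implicit in A3 treating zero as the matching value, and I would state it explicitly as part of the standing assumption.

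\emph{Step 2.} Pass to the limit on the right-hand side. The map $\gelem \mapsto \ForwardOp\bigl(\GroupAction(\gelem,\template)\bigr)$ is continuous, since the group action is matrix multiplication and $\ForwardOp$ is a composition of smooth maps. Together with joint continuity A1, this gives $\LossFunc_{\data^{n_k}}(\hat\gelem) \to \LossFunc_\data(\hat\gelem)$. By hypothesis $\ForwardOp(\GroupAction(\hat\gelem,\template)) \sim \data$, so A3 makes this limit zero.

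\emph{Step 3.} Pass to the limit on the left-hand side. The arguments $\bigl(\ForwardOp(\GroupAction(\gelem^{n_k},\template)), \data^{n_k}\bigr)$ converge jointly to $\bigl(\ForwardOp(\GroupAction(\gelem^*,\template)), \data\bigr)$, so A1 gives $\LossFunc_{\data^{n_k}}(\gelem^{n_k}) \to \LossFunc_\data(\gelem^*)$. Squeezing between Steps 1 and 2 yields $\LossFunc_\data(\gelem^*) = 0$. Since the loss is a finite sum of nonnegative components, each component vanishes, and A3 concludes the argument.

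The main obstacle is the joint limit in Step 3. It needs joint continuity in both arguments, not just continuity in $\gelem$ for fixed data, which is exactly why A1 is phrased jointly. A secondary point is the nonnegativity of the loss and how A3 applies to the summed loss versus the component losses. If A3 is meant only for the full $\LossFunc_\DataSpace$ on $\DataSpace = \bigoplus_j \DataSpace_j$, I would apply it directly to the sum, with $\sim$ the product relation. Either way, the argument reduces to the squeeze above.
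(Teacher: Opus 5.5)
Your proposal is correct and follows the paper's proof essentially step for step: compactness of $\SO(3)^\numPep$ gives a convergent subsequence, the comparison $0 \le \LossFunc_{\data^{n_k}}(\gelem^{n_k}) \le \LossFunc_{\data^{n_k}}(\hat\gelem) \to \LossFunc_\data(\hat\gelem) = 0$ follows from joint continuity and A3, and joint continuity applied once more yields $\LossFunc_\data(\gelem^*) = 0$, after which A3 concludes. You are right to flag nonnegativity of the fidelity, since the squeeze needs it to exclude $\LossFunc_\data(\gelem^*) < 0$; the paper uses it only tacitly in the lower bound.
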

\begin{proof}
    By compactness, we extract a subsequence $\gelem^{n_k}$ converging to some $\gelem^* \in \SO(3)^\numPep$.
    By the definition of the sequence $(\gelem^n)_{n=1}^\infty$, comparing with $\hat{\gelem}$ gives that
    \begin{align*}
        0 \leq \LossFunc_{\data^{n_k}}(\gelem^{n_k}) \leq \LossFunc_{\data^{n_k}}(\hat \gelem) \to \LossFunc_\data(\hat\gelem) = 0,
    \end{align*}
    where the limit is by joint continuity and that $\data^{n} \to \data$ and the final equality holds since $\ForwardOp(\GroupAction(\hat{\gelem},\template)) \sim \data$. 
   By joint continuity, we have that 
   $\LossFunc_{\data^{n_k}}(\gelem^{n_k}) \to \LossFunc_{\data}(\gelem^*)$ so that $\LossFunc_\data(\gelem^*) = 0$, so $\ForwardOp(\GroupAction(\gelem^*,\template)) \sim \data$ by the definition of $\sim$.
\end{proof}
We remark that the equivalence $\sim$ depends on the choice of data fidelity. 
For instance, if one chooses to work with $\LpSpace^2$-type norms, $\sim$ is equality. 

Finally we consider the convergence of the gradient flow itself. 
We prove that every limit point of the flow is a critical point that is not a local maximum. 
\begin{proposition}[Convergence of gradient flow]
    Let $\LossFunc_\data$ as defined in \cref{eq:so3_prob_unreg} be $C^2$ and let $\gelem(t)$ solve the gradient flow 
    \begin{align}
        \dot{\gelem} = -\nabla \LossFunc_\data(\gelem).
    \end{align}
    Then, the following holds:
    \begin{enumerate}
        \item $\LossFunc_\data(\gelem(t))$ decreases monotonically to a limit $\LossFunc_\infty$.
    \item$\bigl\|\nabla\LossFunc(\gelem(t))\bigr\|_{\gelem(t)} \to 0$ as $t \to \infty$.
        \item Every limit point of the flow as $t \to \infty$ is a critical point which is not a maximum.
    \end{enumerate}
\end{proposition}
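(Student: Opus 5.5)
Part (1) follows from the energy identity \cref{eq:en_decrease}: along the global solution of \cref{prop:so3_wellp},
\[
\frac{\mathrm d}{\mathrm d t}\LossFunc_\data(\gelem(t)) = -\bigl\|\nabla\LossFunc_\data(\gelem(t))\bigr\|_{\gelem(t)}^2 \le 0 .
\]
So $t\mapsto \LossFunc_\data(\gelem(t))$ is nonincreasing. Since $\LossFunc_\data$ is continuous on the compact group $\LieGroup=\SO(3)^\numPep$, it is bounded below, and hence it converges to some $\LossFunc_\infty$. Integrating the identity over $[0,T]$ and letting $T\to\infty$ gives the dissipation bound
\[
\int_0^\infty \bigl\|\nabla\LossFunc_\data(\gelem(t))\bigr\|_{\gelem(t)}^2 \dint t = \LossFunc_\data(\gelem(0))-\LossFunc_\infty < \infty .
\]

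For (2), I will upgrade integrability of $\phi(t):=\|\nabla\LossFunc_\data(\gelem(t))\|_{\gelem(t)}^2$ to $\phi(t)\to 0$ using a Barbălat-type argument, i.e.\ uniform continuity of $\phi$. This is the main technical step. By right-invariance, $\phi(t)=\|\eta(t)\|_\inertia^2$ with $\eta = \nabla\LossFunc_\data(\gelem)\gelem^{-1}$, and $\gelem\mapsto \nabla\LossFunc_\data(\gelem)\gelem^{-1}\in\SOLieAlgebra(3)^\numPep$ is $C^1$ because $\LossFunc_\data$ is $C^2$. Compactness of $\LieGroup$ then makes this map Lipschitz with respect to $\LieGroupDistance$, so $\phi$ is Lipschitz along curves with constant $C$, say. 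The curve speed $\|\dot\gelem\|_\gelem=\sqrt{\phi}$ is bounded by $\sup_\LieGroup\|\nabla\LossFunc_\data\|<\infty$, so $|\phi(t)-\phi(s)|\le C'|t-s|$. Alternatively, I can compute $\dot\phi$ directly; it involves the Hessian, which is bounded on the compact $\LieGroup$. A uniformly continuous nonnegative integrable function tends to zero: if $\phi(t_k)\ge\varepsilon$ along $t_k\to\infty$, then $\phi\ge\varepsilon/2$ on intervals of fixed length $\varepsilon/(2C')$ around each $t_k$, contradicting finiteness of the integral.

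For (3), compactness guarantees that limit points exist. Let $\gelem(t_k)\to\gelem^*$. Continuity of $\gelem\mapsto\|\nabla\LossFunc_\data(\gelem)\|_\gelem$ together with (2) gives $\nabla\LossFunc_\data(\gelem^*)=0$, so $\gelem^*$ is critical, and continuity gives $\LossFunc_\data(\gelem^*)=\LossFunc_\infty$.

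For the ``not a maximum'' clause, I will argue by contradiction. Suppose $\gelem^*$ is a strict local maximum. If the flow is not stationary, then $\LossFunc_\data(\gelem(t))>\LossFunc_\infty$ for all $t$, since the energy is nonincreasing and strictly decreasing unless the flow is stationary. But near $\gelem^*$ we have $\LossFunc_\data\le \LossFunc_\data(\gelem^*)=\LossFunc_\infty$, and $\gelem(t_k)$ lies in that neighborhood for large $k$. This is a contradiction. So, unless the trajectory is the equilibrium itself, i.e.\ it starts at $\gelem^*$, limit points are not local maxima.

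I expect this last clause to need care. For non-strict maxima the monotonicity argument only yields that $\LossFunc_\data$ is constant along the tail, and then uniqueness of solutions (\cref{prop:so3_wellp}) forces the trajectory to be an equilibrium. I would therefore state the result with the caveat that the initial point is not itself a critical point of this type.
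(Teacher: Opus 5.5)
Your proof follows the paper's route: the energy identity \cref{eq:en_decrease} plus compactness for (1), and Barbalat's lemma for (2), with uniform continuity obtained exactly as in your ``alternative'' (the paper differentiates $F(\gelem)=\|\nabla\LossFunc_\data(\gelem)\|_\gelem^2$ along the flow and bounds $\nabla F$ and $\nabla\LossFunc_\data$ on the compact group), and continuity at an accumulation point for (3). Your caveat on the ``not a maximum'' clause is justified and sharper than the paper, whose one-line ``as the flow decreases'' argument overlooks the stationary trajectory started at a critical local maximum (for instance a global maximizer of $\LossFunc_\data$, which exists by compactness); moreover, your contradiction argument never uses strictness, since it only needs $\LossFunc_\data\le\LossFunc_\infty$ near $\gelem^*$ together with $\LossFunc_\data(\gelem(t))>\LossFunc_\infty$ for all $t$ along a non-stationary trajectory, so the same reasoning already excludes non-strict local maxima.
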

\begin{proof}
    By \cref{eq:en_decrease}, the energy decreases along the flow. 
    Since $\LossFunc_\data\colon \SO(3)^\numPep \to \Real$ is a continuous function on a compact set, it is bounded from below and so $\LossFunc_\data(\gelem(t))$ converges monotonically to some $\LossFunc_\infty$. 

    Consider now the right-trivialized gradient from \cref{corr:so3grad}, i.e., the quantity $\eta(t)$. 
    It is an element of the Lie algebra $\SOLieAlgebra(3)^\numPep$, and 
    \begin{align*}       \left\|\nabla\LossFunc_\data(\gelem(t))\right\|_{\gelem(t)} = \left\|\eta(t)\right\|_{\inertia}.
    \end{align*}
    Further, it holds that 
    \begin{align}
        \label{eq:squared_int}
        \int_0^\infty \|\eta(t)\|^2_\inertia \mathrm{d}t=  -\int_0^\infty \frac{\mathrm{d}}{\mathrm{d}t} \LossFunc_\data(\gelem(t))\mathrm{d}t = \LossFunc_\data(\gelem(0))-\LossFunc_\infty < \infty. 
    \end{align}
    Set $F\colon \SO(3)^\numPep \to \Real$ to be the function 
    $F(\gelem) = \|\nabla\LossFunc_\data(\gelem)\|^2_\gelem$.
    Since $\LossFunc_\data$ is in $C^2$, $F$ is in $C^1$.
    Along the flow, we have that $F(\gelem(t)) = \|\eta(t)\|^2_\inertia$, and
    \begin{align*}
        \frac{\mathrm{d}}{\mathrm{d}t} F(\gelem(t)) = -\inpro[\gelem(t)]{\nabla F(\gelem(t))}{\nabla\LossFunc_\data(\gelem(t))}.
    \end{align*}
    Both $\nabla F$ and $\nabla\LossFunc_\data$ are continuous on a compact set, so they are bounded, and therefore, $\frac{\mathrm{d}}{\mathrm{d}t} F(\gelem(t))$ is bounded, meaning in turn that the mapping 
    $t \mapsto F(\gelem(t))$ is uniformly continuous, so taken together with \cref{eq:squared_int}  Barbalat's lemma yields that $F(\gelem(t)) \to 0$, and the second statement of the proposition follows. 
    
    Since $\SO(3)^\numPep$ is compact, the flow has at least one accumulation point. 
    That is, there is a sequence of times $(t_n)_{n=1}^\infty$ with $\lim_{n\to \infty} t_n = \infty $ and a point $\gelem^* \in \SO(3)^\numPep$ such that $\lim_{n \to \infty} \gelem(t_n) = \gelem^*$.
    By continuity, we have that 
    $F(\gelem^*) = \lim_{n \to \infty} F(\gelem(t_n)) = 0$, meaning that any such $\gelem^*$ also is a critical point. 
    Moreover, $\LossFunc_\data(\gelem^*) = \LossFunc_\infty$, and as the flow decreases, $\gelem^*$ cannot be a local maximum. 
\end{proof}
\section{Numerical illustrations}
In this section, we present some numerical illustrations.
Throughout the experiments we base the agreement between a predicted image and an observation with a squared normalized cross-correlation (NCC) fidelity.
For a predicted image $\dataother$ and an observed image $\data$, let
\begin{align*}
    \mathrm{NCC}(\dataother,\data) = \frac{\langle \dataother, \data\rangle^2}{\lVert \dataother \rVert^2\,\lVert \data \rVert^2}
    \in [0,1],
\end{align*}
with the inner product and norms taken over the image pixels.
The data-fidelity functional over the $\numImgs$ observations is
\begin{align*}
     \LossFunc_{\DataSpace} = \sum_{j=1}^\numImgs \bigl(1 - \mathrm{NCC}(\dataother_j, \data_j)\bigr).
\end{align*}
Each term vanishes exactly when $\dataother_j$ is a nonzero scalar multiple of $\data_j$, so the fidelity is invariant to the per-image contrast scale and sign. 
Note that in this case, the $\sim$ equivalence in \cref{prop:conv_vann} is equality up to linear scalings, i.e., $\dataother\sim \data \iff \dataother = c \data$ for some non-zero scalar $c$. 
Further, we assume that no component of $\data$, nor $\ForwardOp(\GroupAction(\gelem,\template))$ for any $\gelem \in \SO(3)^\numPep$, equals $0$ , so that the squared-NCC fidelity is well-defined and jointly continuous.

Its gradient with respect to the predicted images is the element of $\DataSpace$ with components
\begin{align*}
     \bigl(\nabla_{\dataother} \LossFunc_{\DataSpace}\bigr)_j = -\frac{2\langle \dataother_j,\data_j\rangle} {\lVert \dataother_j\rVert^2\,\lVert \data_j\rVert^2} \left(\data_j - \frac{\langle \dataother_j,\data_j\rangle} {\lVert \dataother_j\rVert^2}\,\dataother_j\right),
    \qquad j = 1,\dots,\numImgs,
\end{align*}
which enters in \cref{eq:gf_so3} to obtain the gradient flow that is used in the numerical illustration. 
\footnote{The source code is available via  GitHub at \url{https://github.com/jansson-erik/cryo_shape_gradientflow}.}

\subsection{Data generation}\label{sec:datagen}
In all cases, we use synthetic data. 
For a set of absolute all-atom coordinates, i.e., not the coarse-grained \ch{C_{$\alpha$}} model, we first apply one global rotation $\rotmat \in \SO(3)$ representing the orientation of the protein. 
Denote the rotated coordinates of atom $i$ by $\bbpos_i' = \rotmat \bbpos_i$.
First, each atom is projected to the 2D image plane by removing the final coordinate. Denote the projected coordinate by $ \bbpos_j'' \in \mathbb{R}^2$.
Then, the projected potential is assembled in Fourier space, by summing over atom type $e$,
\begin{align*}
    \hat I(\xi) = \sum_e f_e\bigl(|\xi|\bigr) \hat{\delta}_e(\xi),
    \quad\text{with}\quad
    f_e\bigl(|\xi|\bigr) = \sum_{\alpha = 1}^4 a_\alpha^e \exp(-b_\alpha^e |\xi|^2),
\end{align*}
where $e_j$ is the element of atom $j$, $\xi \in \Real^2$ is the frequency, $\hat{\delta}_e(\xi)=\sum_{j:e_j=e} \exp\bigl(-(2\pi i)  \xi \cdot \bbpos_j''\bigr)$ and $f_e$ are the Doyle--Turner $4-$Gaussian electron scattering factors, with coefficient values taken from Doyle and Turner \cite{Doyle:1968}.

In real space this is a sum of four 2D Gaussians of different widths centred at each projected atom position.
We then apply the CTF in \cref{eq:ctf} with $\alpha = 1$, $\Delta z = 15{,}000\,\text{Å}$, $C_s = 2.7\times10^7\,\text{Å}$, $\lambda = 0.01969\,\text{Å}$ and a $B$-factor of $B_{\mathrm{signal}}=100\,\text{Å}^2$. 
The images are $s \times s$ with $s = 128$ pixels over a size of 
$L = 120\,\text{\AA}$, giving a pixel size of $L/s \approx 0.94\,\text{\AA}$. 

To add noise to the images, we add \emph{correlated} Gaussian noise $\varepsilon$, sampled by drawing white Gaussian noise and filtering it in Fourier space by a Gaussian amplitude envelope with $B_{\mathrm{noise}} = 30\,\text{Å}^2$.
After inverse FFT is applied to $\varepsilon$, we scale it so that the SNR reaches a desired level, by setting 
\begin{align*}
    \sigma_{\mathrm{noise}} = \frac{\sigma_{\mathrm{image}}}{\sqrt{SNR}},
\end{align*}
where $\sigma_{\mathrm{image}}$ is the standard deviation of the clean image intensities $I$, taken over all pixels of all $\numImgs$ images. 
The final image is given by 
\begin{align*}
    \data = I + \varepsilon. 
\end{align*}
The image formation is illustrated in \cref{fig:imgen}. 
Because the noise is not white, the frequency-wise signal-to-noise ratio varies: it is higher at low frequencies and lower at high frequencies (see \cref{fig:imgen}).
\begin{figure}
    \centering
    \includegraphics[width = \linewidth]{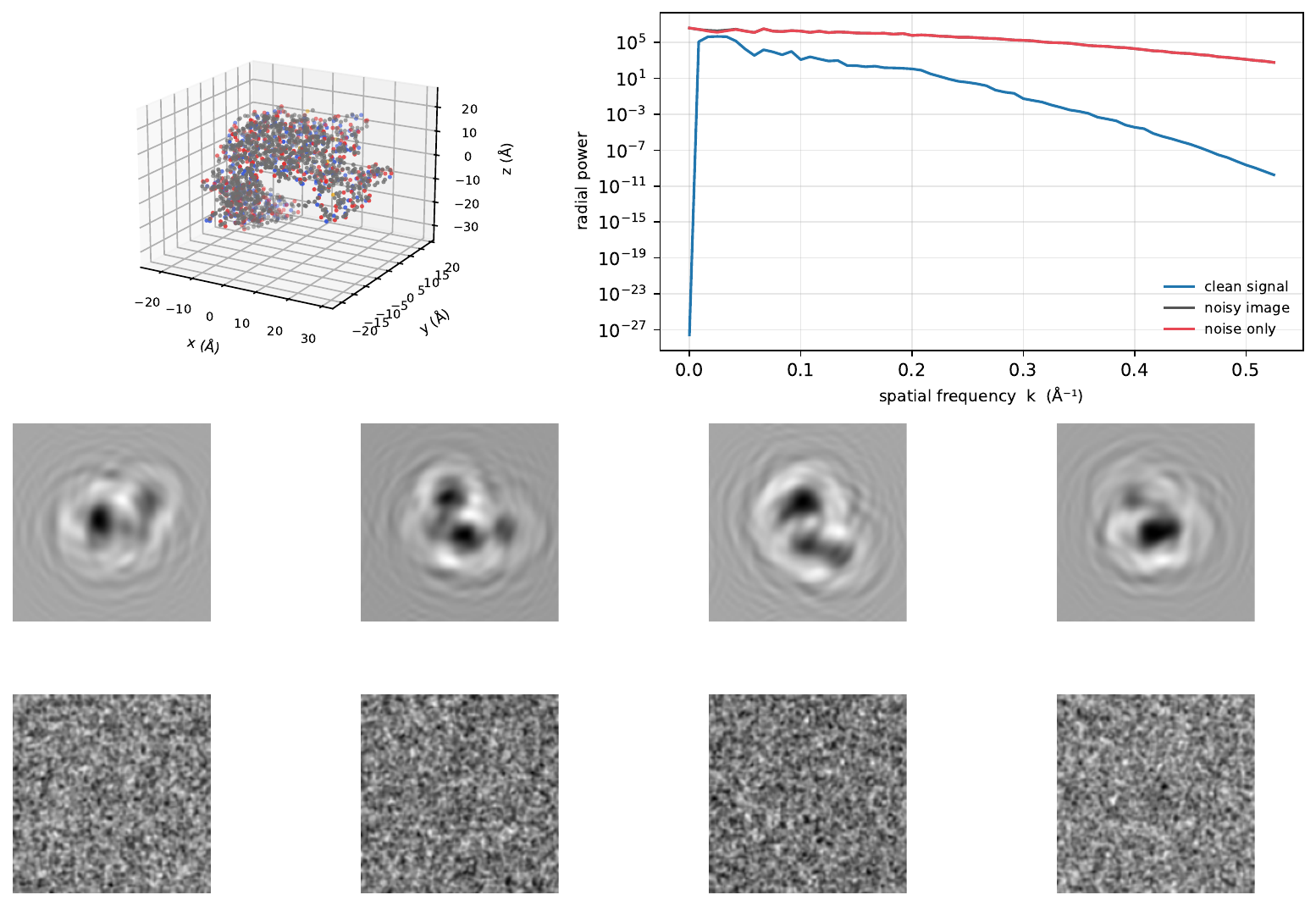}
    \caption{Image generation: the all-atom model is geometrically projected to the 2D detector, and a Doyle--Turner $4$-Gaussian model is applied to simulate the projected potential, after which a CTF with a Gaussian envelope is applied. Additive correlated Gaussian noise is added, which decays more slowly than the image envelope, as is clear from the power spectra. }
    \label{fig:imgen}
\end{figure}

\subsection{Unregularized homogeneous setting: Adenylate Kinase}\label{sec:exp_adk}
We first apply our gradient flow method for indirect matching of 3D protein backbone structures. As a first step, we illustrate the method on \emph{E. coli} adenylate kinase protein, which has a clear closed-to-open transition.

To construct the target, we apply the data generation of \cref{sec:datagen} to the final frame of a protein trajectory from \cite{Beckstein2018,Seyler2015}, which was obtained by forced molecular dynamics simulations. 
Dynamic importance sampling was used to generate a set of trajectories all undergoing a closed-to-open transition, so the final frame is in all cases the open conformation. 
We thus aim to reconstruct the open, final, conformation from synthetic data, generated as in \cref{sec:datagen}. 
From this single open endpoint we generate $\numImgs = 16$ images, each a projection at a known, but uniformly random orientation.
We set the signal-to-noise ratio to $0.01$.

To construct the template, we use the AlphaFold database entry corresponding to the closed conformation\footnote{\url{https://alphafold.ebi.ac.uk/entry/A0A7H9QZH8}} \cite{Jumper2021,Varadi2021}. 
We align the AlphaFold data  with the first frame of the protein trajectories molecular dynamics data using Procrustes analysis without scaling. 
Procrustes analysis  computes the orthogonal linear transformation aligning one point cloud with another that is optimal in the sense that it minimizes the \emph{Procrustes distance} \cite{Gower1975,Kendall1989}. 

For this transition, the deformation does not induce self-intersections, so we run the flow without regularization. 
To specify the forward model, all weights in \cref{eq:gauss} are set equal to $1$, and all standard deviations to $5$. 
We run the flow \cref{eq:gf_so3} discretized with the Lie--Euler method (see for instance the survey by Iserles et al. \cite{Iserles2000}) with a step size of $
    5/\numImgs.
$
 The $1/\numImgs$ scaling compensates for the data-fidelity gradient growing with the number of images.
As we use $\numImgs = 32$, the step size is $0.041667$, and we run the flow for a total flow time of $T = 100$ for a total of $2400$ steps. 
The results are shown in \cref{fig:adk_res}. 
In \cref{fig:adk_res_res}, it is clear that we manage to reconstruct the open conformation, by deforming the closed template. 
The flow convergence behavior is shown in \cref{fig:adk_res_flow}, where we see that the flow behaves like a typical gradient flow, with early rapid convergence that slows down as the gradient norm decreases. 
\begin{figure}[htbp]
  \centering
  \begin{subfigure}[b]{\textwidth}
    \includegraphics[width=\linewidth]{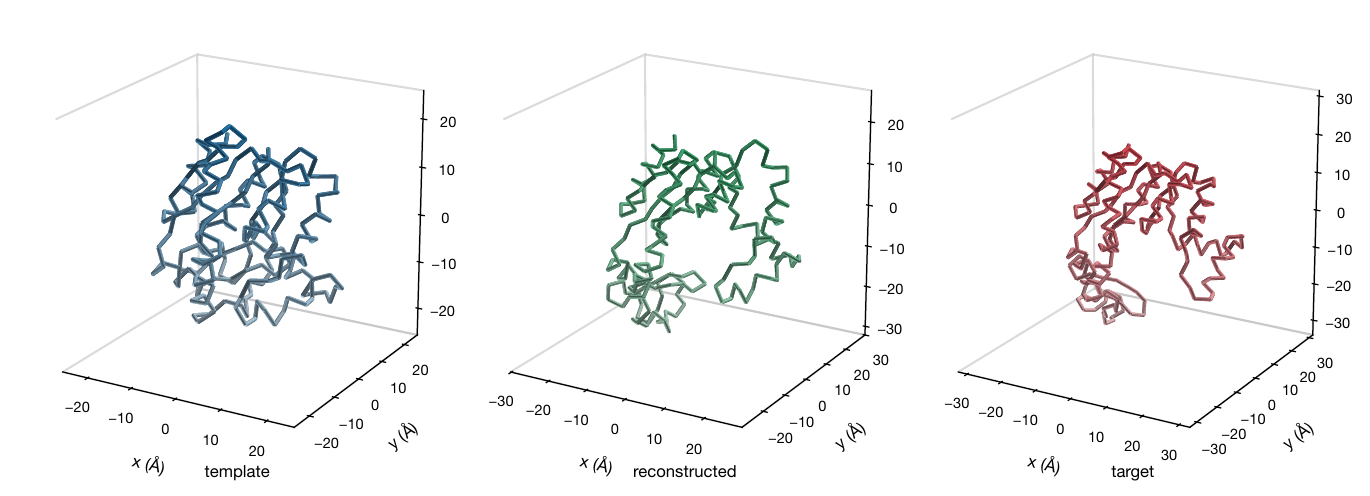}
    \caption{The blue template (left) is deformed into the green structure (middle), to match the indirectly observed structure.}
    \label{fig:adk_res_res}
  \end{subfigure}
  \hfill
  \begin{subfigure}[b]{\textwidth}
    \includegraphics[width=\linewidth]{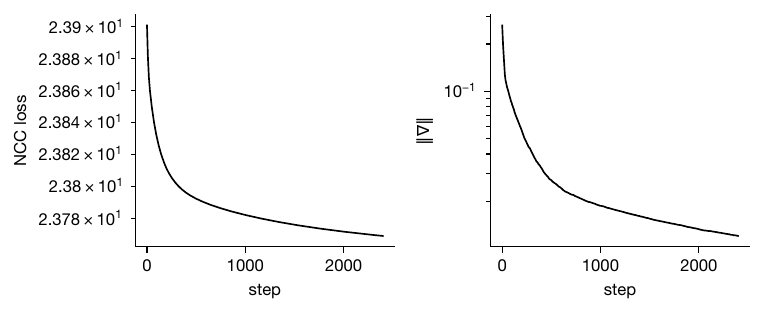}
    \caption{Empirical convergence behavior of the flow. After an initial rapid convergence, the flow slows down as the gradient norm decreases. }
    \label{fig:adk_res_flow}
  \end{subfigure}
  \caption{Results for the adenylate kinase protein.  The gradient flow is unregularized, that is, there is no penalty to discourage for instance self-intersections.}
  \label{fig:adk_res}
\end{figure}
While the resulting deformed template is not a perfect reconstruction, it is clear from \cref{fig:adk_res} that the gradient flow manages to recover the open conformation from the noisy observations.

\subsection{Regularized multi-chain homogeneous setting: Respiratory syncytial virus F protein}
\label{sec:exp_multi}

As a second example, we consider the pre-fusion to post-fusion transition of the RSV F glycoprotein.
It has two stable conformations, a pre-fusion closed shape resembling a closed flower bud, and a post-fusion conformation where the chains have shot out, resulting in a needle-like structure. 
The protein consists of three chains, made up of $350$, $347$ and $342$ residues respectively, for a total of $1039$ residues. 

The target is generated as in \cref{sec:datagen} from the post-fusion all-atom structure, with
$\numImgs = 32$ images at known, uniformly random orientations and a signal-to-noise ratio of $0.05$. The template is the pre-fusion backbone, obtained from McLellan et al. \cite{McLellan2013} \footnote{Protein Data Bank entry 4MMS}, and the flow deforms it towards the post-fusion conformation, obtained from \cite{Swanson2011} \footnote{Protein Data Bank entry 3RRR}.

For a rearrangement of this size the unregularized flow produces backbone self-intersections, since the data fidelity
constrains only the projected images and not the three-dimensional geometry, so this example also serves to illustrate how a regularization can be added.
We add an excluded-volume penalty on the reconstructed \ch{C_{$\alpha$}} coordinates,
\begin{align*}
    \tilde{\RegFunc}(\bbpos) = \frac{\lambda}{2}\!\!
    \sum_{\substack{i,j \\ \lvert i-j\rvert > 2\\ r_{ij} < R_0}}
    \!\!(R_0 - r_{ij})^2 ,
\end{align*}
where $\lambda > 0$ is the regularization strength, $r_{ij} = \lVert \bbpos_i - \bbpos_j \rVert$, and $R_0 = 5\,\text{\AA}$ is the cutoff. The sum runs over pairs that are not close along the chain, $\lvert i - j \rvert > 2$. 
Nearby pairs are excluded because consecutive \ch{C_{$\alpha$}} atoms are close in space.

The penalty enters the flow through the same construction as the data fidelity. We obtain an additional term in the energy by composing the group action on the template, the reconstruction map to absolute coordinates, and the penalty $\tilde{\RegFunc}$,
\begin{align*}
    \RegFunc(g) = \tilde{\RegFunc}\bigl(\opStyle{M}(g \cdot \template)\bigr) .
\end{align*}
Its gradient with respect to $g$ follows from the chain rule and momentum map as in \cref{th:gradient_general} in complete analogy with how the gradient is computed for the data term, with $\tilde{\RegFunc}$ in place of the data fidelity, so the regularizer stays within the same variational framework, and the flow descends it as in \cref{eq:gf_so3}.

To choose the penalty strength we ran the reconstruction across a range of $\lambda$.
We again integrate \cref{eq:gf_so3} with the Lie--Euler method with a step size of  $6.4975\times 10^{-4}$, and we run for a total time of $T = 1000$.
\Cref{tab:rsvf_reg} reports, for each $\lambda$, the number of \ch{C_{$\alpha$}} pairs separated by more than two residues along the chain  which are closer than $R_0$, together with the minimum separation and the counts below $4$ and $3\,\text{\AA}$. 
Without regularization the reconstruction contains $698$ pairs closer than $3\,\text{\AA}$, with a minimum separation of $0.27\,\text{\AA}$: the backbone self-intersects severely.
Any nonzero $\lambda$ removes the close contacts, and from $\lambda = 10^{-3}$ upward no pair lies below $4\,\text{\AA}$. 
The reconstructions are essentially unchanged over the interval $\lambda \in [10^{-3}, 10^{-1}]$ (\Cref{fig:rsvf_reg}), so the result is insensitive to the precise value.
However, at $\lambda = 1$ the penalty dominates the data term and the structure is visibly distorted. 
The results with $\lambda = 10^{-3}$ are shown in \cref{fig:rsvf_res}. The flow recovers the post-fusion conformation from the
pre-fusion template, but again, not yielding perfect results.

\begin{figure}[htbp]
    \centering
    \begin{subfigure}[b]{\textwidth}
        \centering
        \includegraphics[width=\linewidth]{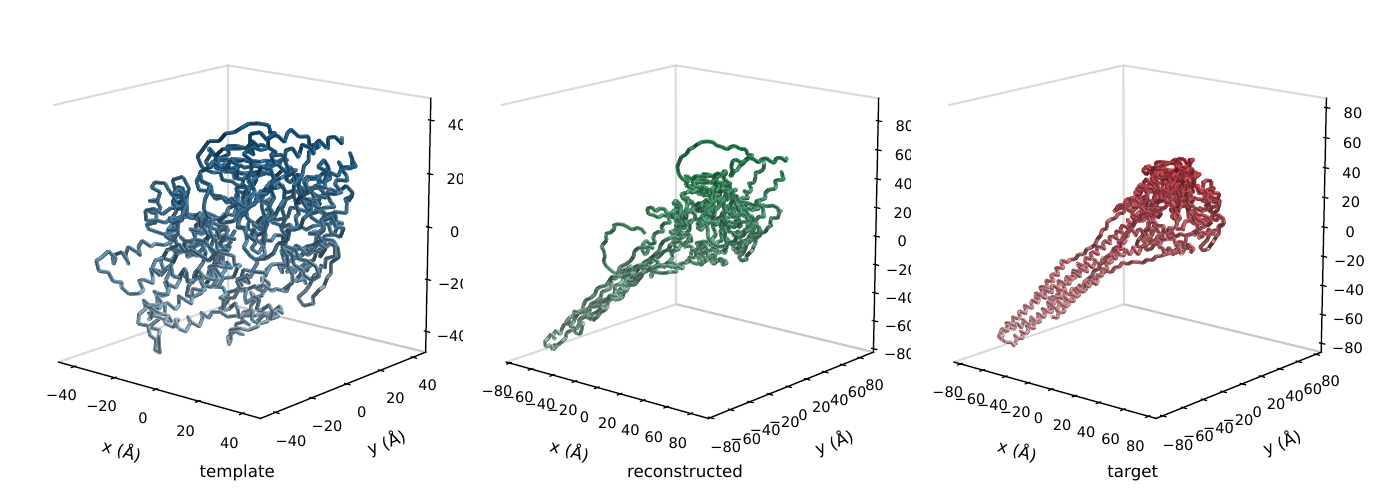}
        \caption{The pre-fusion template (left) is deformed (middle) to match the
    indirectly observed post-fusion structure (right).}
        \label{fig:rsvf_res_res}
    \end{subfigure}
    \hfill
    \begin{subfigure}[b]{\textwidth}
        \centering
        \includegraphics[width=\linewidth]{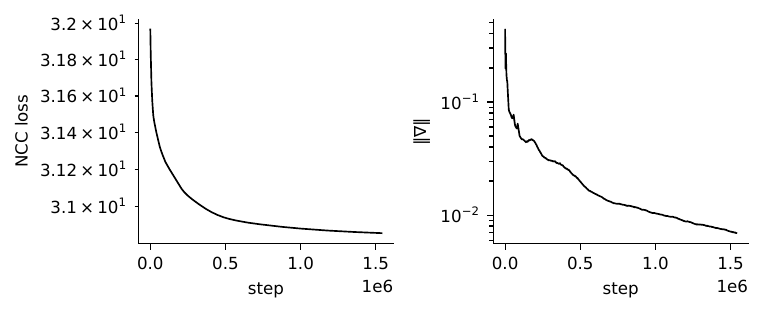}
        \caption{Convergence behaviour of the flow.}
        \label{fig:rsvf_res_flow}
    \end{subfigure}
    \caption{Results for the RSV F glycoprotein. The flow is regularised with the
  excluded-volume penalty~$\tilde{\RegFunc}$ to discourage backbone self-intersections during the
  large pre- to post-fusion rearrangement.}
    \label{fig:rsvf_res}
\end{figure}
\begin{table}[htbp]
  \centering
  \begin{tblr}{
      colspec = {l S[table-format=4.0] S[table-format=1.2] S[table-format=4.0] S[table-format=3.0]},
      row{1} = {guard, font=\bfseries},
    }
    \toprule
    $\lambda$ & {pairs $<R_0$} & {$\min r$ (\AA)} & {$r<4$\,\AA} & {$r<3$\,\AA} \\
    \midrule
    $0$          & 2570 & 0.27 & 1496 & 698 \\
    $10^{-4}$    &  872 & 3.36 &   63 &   0 \\
    $10^{-3}$    &  537 & 4.72 &    0 &   0 \\
    $10^{-2}$    &  486 & 4.96 &    0 &   0 \\
    $10^{-1}$    &  473 & 5.00 &    0 &   0 \\
    $10^{0}$     &  113 & 5.00 &    0 &   0 \\
    \bottomrule
  \end{tblr}
  \caption{Number of \ch{C_{$\alpha$}} pairs closer than $R_0 = 5\,\text{\AA}$ in the final RSV F reconstruction, for a range of regularization strengths $\lambda$. Pairs within two residues in sequence are excluded, since consecutive \ch{C_{$\alpha$}} atoms tend to lie $\approx 3.8\,\text{\AA}$ apart.
The cutoff $R_0$ is the range of the excluded-volume penalty, so the first column counts the pairs on which the penalty is active. The
$4\,\text{\AA}$ and $3\,\text{\AA}$ columns report how many of these pairs are closer still.}
  \label{tab:rsvf_reg}
\end{table}
\begin{figure}[htbp]
    \centering
    \includegraphics[width=\linewidth]{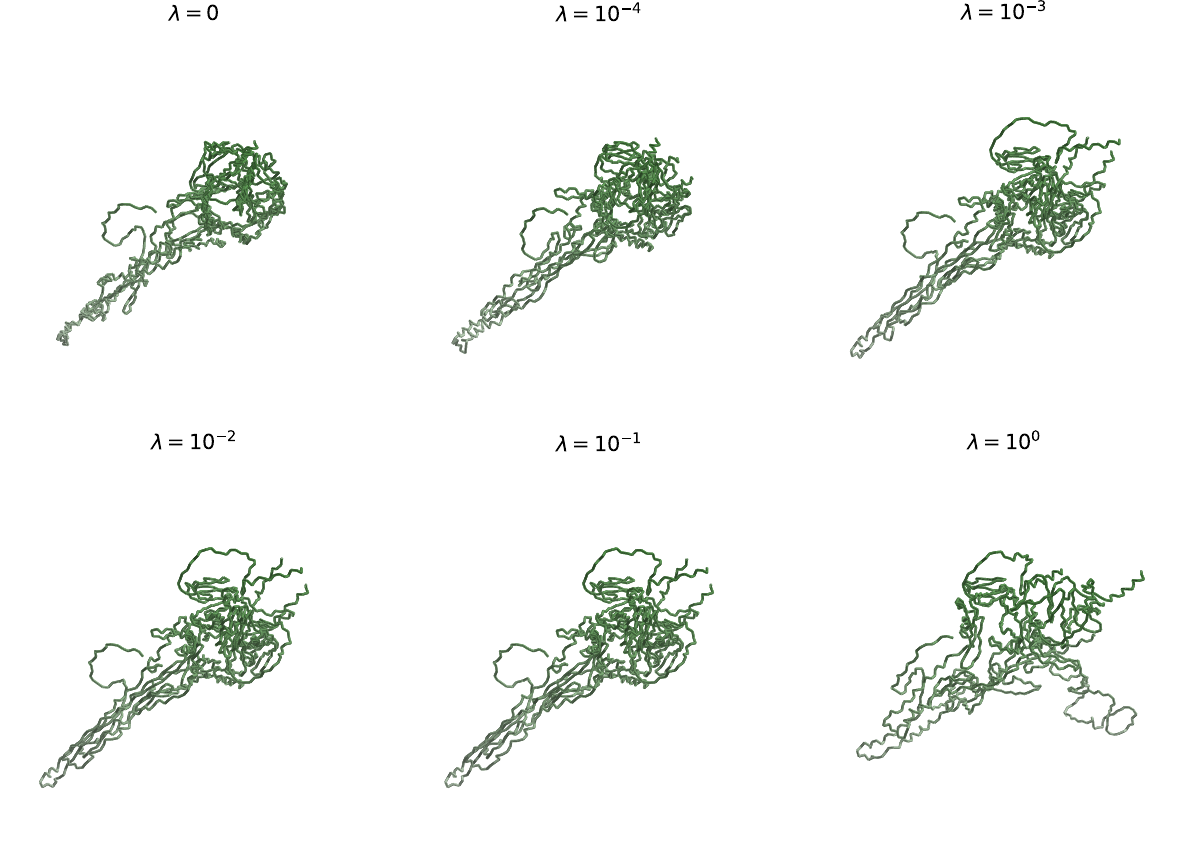}
    \caption{Reconstructions of the RSV F protein across regularization strengths $\lambda$. At $\lambda = 0$, the protein folds through itself, but between $\lambda = 10^{-3}$ and $\lambda = 10^{-1}$, such clashes do not appear. When $\lambda = 1$ the penalty overwhelms the data term and the structure is distended.}
    \label{fig:rsvf_reg}
\end{figure}

\subsection{Capturing a conformational transition}
\label{sec:exp_het}
The previous examples recover a single conformation from the images of one structure. We now consider a data set containing a conformational transition, and ask whether the gradient-flow based method can recover it.
We again use \emph{E. coli} adenylate kinase, and the dynamic importance sampling trajectories of \cite{Beckstein2018,Seyler2015}, but now we use every frame rather than only the final one. We take $80$ trajectories and generate one image per frame at a known, uniformly random orientation, giving $\numImgs = 3946$ images of $3946$ distinct
conformations, spanning the closed-to-open transition. The signal-to-noise ratio is $0.05$. The image generation procedure used is the same as that in \cref{sec:exp_adk}.

Reconstructing one structure from all images would result in an averaged conformation.
We instead group the images by a coordinate estimated from data, and reconstruct each group separately. We emphasize that estimating such a coordinate is a problem in its own right and is not the contribution here. We use a deliberately simple choice, in order to demonstrate that the gradient flow method can be applied unchanged once a coordinate is available. 

We take a trajectory that was not used to generate images and extract its first and final frame. 
For each image we score both references
at the known orientation of that image, and take the difference
\begin{align*}
    t_m = \mathrm{NCC}(\dataother^{\mathrm{closed}}_m, \data_m)
        - \mathrm{NCC}(\dataother^{\mathrm{open}}_m, \data_m),
\end{align*}
where $\dataother^{\mathrm{closed}}_m$ and $\dataother^{\mathrm{open}}_m$ denote the two references
projected at the orientation of image $\data_m$. 
The idea is that $t_m$ decreases as the chain transitions from closed to open. \Cref{fig:het_coord} shows that the coordinate approximately tracks the transition, but with a wide variation due to the noise level. 
\begin{figure}[htbp]
    \centering
    \includegraphics[width=0.55\linewidth]{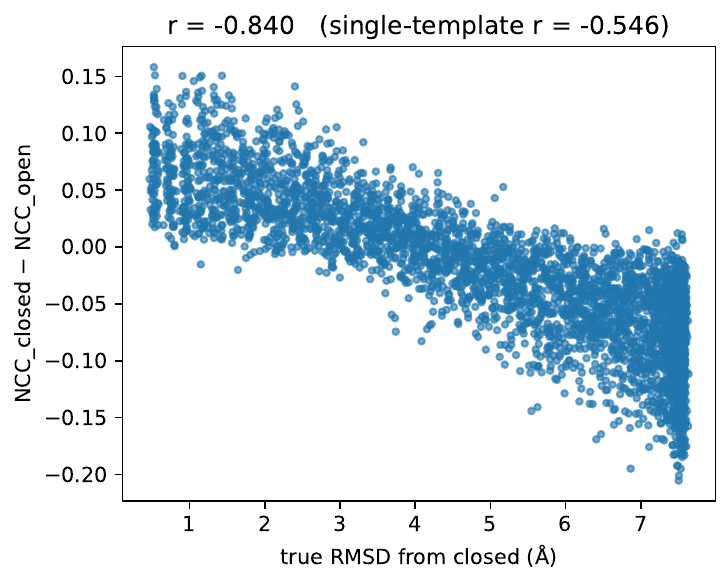}
    \caption{The conformational coordinate $t_m$ against the true distance of each
    imaged conformation from the closed reference. Each point is one of the
    $\numImgs$ images. The coordinate is computed from the noisy images and the two
    references only.}
    \label{fig:het_coord}
\end{figure}

We sort the images by $t_m$ and partition them into $50$ bins, each containing $78$ images.
 Each group is reconstructed independently, by running the gradient flow in \cref{eq:gf_so3} starting from the closed template.
We re-iterate that the open endpoint is only used in constructing the coordinate $t_m$, and is not used in the reconstruction. 
We integrate with the Lie--Euler method at a step size of $0.0133$ to a total time of $T = 100$,
for $7503$ steps per group.

The results are shown in \cref{fig:het_backbones} and \cref{fig:het_disp}. 
The reconstructions progress from the closed to the open conformation. 
The conformations recovered are different from the template. Indeed, the disparity between the  mean conformations in each bin and the reconstructions stay approximately constant across the transition, whereas the disparity between the template and the bin means grows. 
We reiterate that  mean conformations are not used except for evaluation and thus do not enter the gradient flow or the coordinate.

\begin{figure}[htbp]
    \centering
    \includegraphics[width=\linewidth]{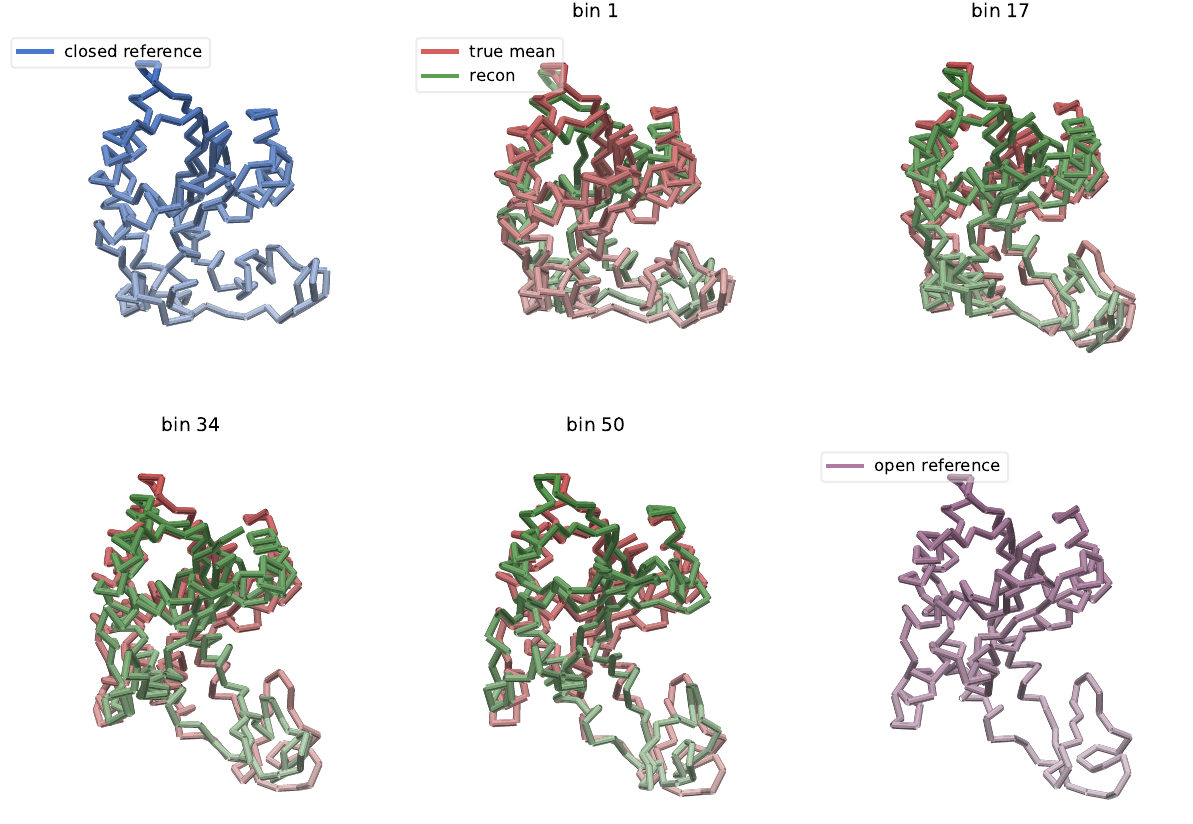}
    \caption{Reconstructions across the conformational coordinate. Green: the reconstruction for that group; red: the group's true mean conformation, shown for evaluation only. The flow starts from the closed reference (blue) and is never given the open reference (purple).}
    \label{fig:het_backbones}
\end{figure}

\begin{figure}[htbp]
    \centering
    \includegraphics[width=0.55\linewidth]{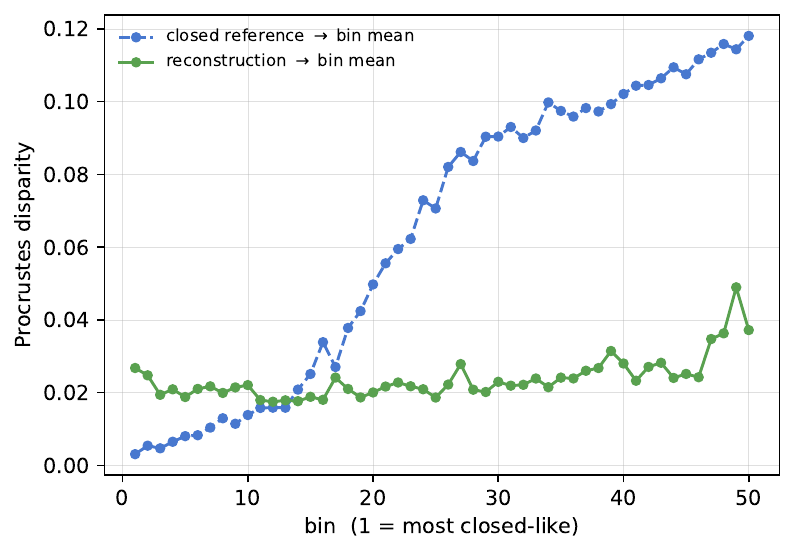}
    \caption{Procrustes disparity between each group's true mean conformation and,
    respectively, the closed reference and the reconstruction. The reconstruction  holds approximately constant disparity across the transition while the fixed reference degrades.}
    \label{fig:het_disp}
\end{figure}

\subsection{ Parameter sensitivity study}
The preceding examples were qualitative, in the sense that we did not examine how the reconstruction quality depends on the parameters of the problem. We now do so on a controlled synthetic problem: a single adenylate kinase conformation, reconstructed from its projections as in \cref{sec:exp_adk}, so that the true structure is known, and the
reconstruction error can be measured directly.
As the error measure, we use the rotation and translation Procrustes disparity between the reconstruction and the true structure, a dimensionless quantity that is invariant to rigid motion. 
For each parameter setting, we run $50$ independent trials with fresh noise and orientations, and report the median and inter-quartile range. 
A trial is counted as converged when the relative change in the loss falls below $10^{-4}$; trials that reach the step cap without meeting this criterion are drawn with open markers.
We vary three quantities: the number of projections $\numImgs$, the noise level, and the integration step size. 

\Cref{fig:param_M} shows the disparity against the number of projections $\numImgs$, at three signal-to-noise ratios. 
At every noise level, the disparity decreases monotonically with $\numImgs$ and then flattens onto a common floor of about $0.025$ for $\numImgs \ge 16$.
This floor might be due to the model mismatch, the coarse-graining of the proteins used in the reconstruction (as compared to the all-atom model used to generate data).
\begin{figure}[htbp]
    \centering
    \includegraphics[width=0.6\linewidth]{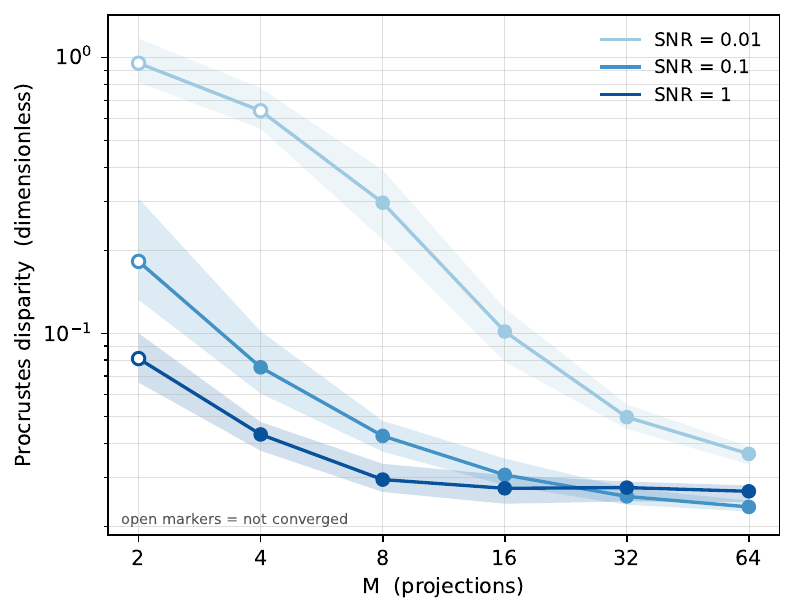}
    \caption{Procrustes disparity to the true structure against the number of projections $\numImgs$, at three signal-to-noise ratios. Median and inter-quartile range over $50$ trials. Open markers denote settings where not all trials converged. The disparity decreases with $\numImgs$ and flattens onto a model-mismatch floor.}
    \label{fig:param_M}
\end{figure}

\Cref{fig:param_noise} shows the disparity against the noise level, at 
three values of $\numImgs$. 
For $\numImgs = 64$ the disparity remains low and nearly flat across the whole range, degrading only at the highest noise. The disparity starts to grow again at a small nonzero noise level as the noise level further decreases. 
This is possibly due to that a small amount of noise suppresses some of the errors caused by details that the coarse forward model cannot represent. 
More projections therefore buy robustness to noise, consistent with \cref{fig:param_M}.

\begin{figure}[htbp]
    \centering
    \includegraphics[width=0.6\linewidth]{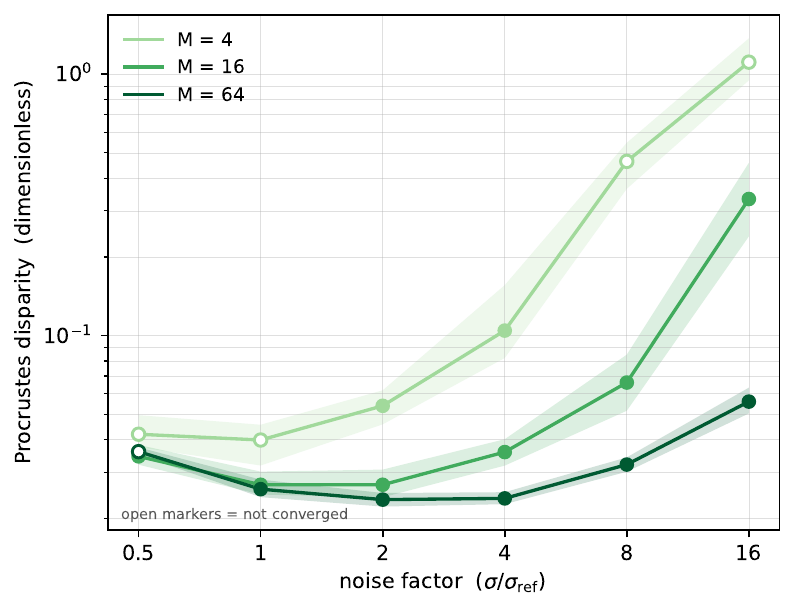}
    \caption{Procrustes disparity to the true structure against the noise level, at three numbers of projections $\numImgs$. More projections give greater robustness to noise.}
    \label{fig:param_noise}
\end{figure}

Finally, \cref{fig:param_dt} examines the integration step size, reported as a multiple of the calibrated step used in \cref{sec:exp_adk}. The disparity is essentially constant for step sizes up to the calibrated value, and the flow diverges only once the step is increased several-fold beyond it. The method is therefore insensitive to the step size within a broad stable range, and the value used throughout this work lies inside that range. 
Note also that if the step size is too large, the flow may not converge. 

\begin{figure}[htbp]
    \centering
    \includegraphics[width=0.6\linewidth]{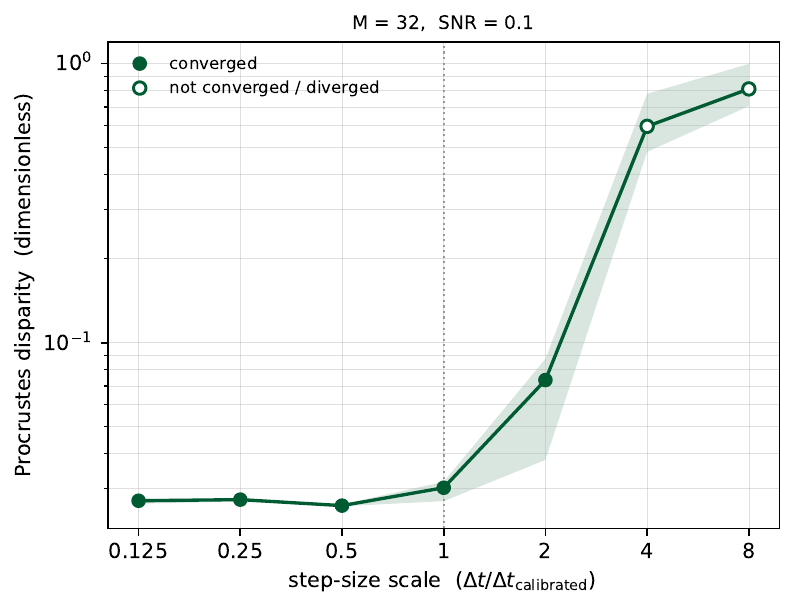}
    \caption{Procrustes disparity against the integration step size, as a multiple of the calibrated step. The reconstruction is insensitive to the step within a broad range and diverges only well beyond the calibrated value.}
    \label{fig:param_dt}
\end{figure}

\section{Conclusion and outlook}
We have presented a method for recovering the backbone of a protein directly from \ac{Cryo-SPA} data, without first reconstructing a 3D map. 
The reconstruction is formulated as an indirect shape matching problem: a point-cloud template is deformed by a gradient flow on a Lie group of deformations to match projections against the observed data. 

We derived the gradient flow in its general geometric form, so that the framework can be applied to a wider range of inverse problems by varying the group. For instance, tomography is a possible application, and requires the use of infinite-dimensional groups. In this setting, questions such as existence of the flow are non-trivial to answer, and are something we intend to explore in future work. 

In the \ac{Cryo-EM} setting, we derived closed-form expressions of the gradient flow and demonstrated it on several synthetic examples, both single- and multi-chain backbones as well as on images of conformational changes, where we could recover the transition given a sufficiently descriptive coordinate. 

Several assumptions limit the present study. The orientations are taken to be known but in practice they must be estimated, and jointly with the conformation this is a substantially harder problem. In the heterogeneous example, the conformational coordinate is supplied by two reference structures and estimated separately from the reconstruction, but recovering such a coordinate from the data alone is a problem in its own right and is not addressed here.

These limitations indicate the natural next steps. The most immediate is to relax the known-orientation assumption and estimate poses jointly with the deformation. Here, we could for instance apply the work of Diepeveen et al. \cite{Diepeveen:2023aa}. 
 Finally, applying the method to experimental data, and replacing the fixed conformational coordinate with one learned from the images, would move it from a controlled demonstration towards a practical reconstruction tool.

\section*{Acknowledgments}
The authors thank Klas Modin for valuable discussion and insights. EJ was funded by the Knut and Alice Wallenberg Foundation grant 2024.0440. JK and OÖ acknowledge support from the Swedish Research Council grant 2020-03107.

\bibliographystyle{plain}
\bibliography{refs}

\end{document}